\newif\if@restonecol
\let\chapter\section
\newtheorem{problem}{Problem}
\newtheorem{lemma}{Lemma}
\newcommand{\hh}[1]{{\small\color{red}{\bf hh: #1}}}
\newcommand{\lli}[1]{{\small\color{blue}{\bf lli: #1}}}
\newcommand{\hide}[1]{}
\newcommand{\mkclean}{
   \renewcommand{\hh}[1]{}
   \renewcommand{\lli}[1]{}
}
\newcommand{\mat}[1]{{\bf #1}}   
\newcommand{\teamrep}{\textsc{Team Member Replacement} }  %
\newcommand{\teamrepalgbasic}{\textsc{TeamRep-Basic}}  %
\newcommand{\teamrepalgexact}{\textsc{TeamRep-Fast-Exact}}  %
\newcommand{\teamrepalgapp}{\textsc{TeamRep-Fast-Approx}}  %
\begin{document}
%
\conferenceinfo{xxx}{2014, xxx}

\title{Replacing the Irreplaceable: Fast Algorithms for Team Member Recommendation}

%
%
%
%

\numberofauthors{6} 
%

\author{
%
%
\alignauthor
Liangyue Li\\
       \affaddr{Arizona State University}\\
       \email{liangyue@asu.edu}
\alignauthor
Hanghang Tong\\
       \affaddr{Arizona State University}\\
       \email{Hanghang.Tong@asu.edu}
\alignauthor
Nan Cao\\
       \affaddr{IBM Research}\\
       \email{nancao@us.ibm.com}
\and
\alignauthor
Kate Ehrlich \\
		\affaddr{IBM Research}\\
		\email{katee@us.ibm.com}
\alignauthor
Yu-Ru Lin\\
		\affaddr{University of Pittsburgh}\\
		\email{yurulin@pitt.edu}
\alignauthor
Norbou Buchler\\
		\affaddr{US Army Research Laboratory}\\
		\email{norbou.buchler.civ@mail.mil}		
}


\maketitle
\mkclean
\begin{abstract}

In this paper, we study the problem of \teamrep: given a team of people embedded in a social network working on the same task, find a good candidate who can fit in the team after one team member becomes unavailable. We conjecture that a good team member replacement  should have good \emph{skill matching} as well as good \emph{structure matching}.  We formulate this problem using the concept of graph kernel. To tackle the computational challenges, we propose a family of fast algorithms by (a) designing effective pruning strategies, and (b) exploring the smoothness between the existing and the new team structures. We conduct extensive experimental evaluations on real world datasets to demonstrate the effectiveness and efficiency. Our algorithms (a) perform significantly better than the alternative choices in terms of both precision and recall; and (b) scale {\em sub-linearly}.

\end{abstract}




\section{Introduction}
In his world-widely renowned book ``The Science of the Artificial''~\cite{Simon:1996:SA:237774}, Nobel laureate Herbert Simon pointed out that it is more the complexity of the {\em environment}, than the complexity of the individual persons, that determines the complex behavior of humans. The emergence of online social network sites and web 2.0 applications provides a new connected environment/context, where people interact and collaborate with each other as a team to collectively perform some complex tasks. 

Among others, the churn of team members is a common problem across many application domains. To name a few, an employee in a software or sales team might decide to leave the organization and/or be assigned to a new task. In a law enforcement mission, a SWAT team might lose certain task force due to the fatality or injury. In professional sports (e.g., NBA), the rotation tactic between the benches could play a key role on the game outcome. 
In all these cases, the loss of the key member (i.e., the irreplaceable) might bring the catastrophic consequence to the team performance. How can we find the best alternate when a team member becomes unavailable? However, despite the frequency with which people leave a team before a project/task is complete and the resulting disruption~\cite{DBLP:conf/chi/ZadehBKC11}, replacements are often found opportunistically and are not necessarily optimal.


We conjecture there will be less disruption when the team member who leaves is replaced with someone with similar relationships with the other team members.   This conjecture is inspired by some recent research which shows that team members prefer to work with people they have worked with before~\cite{Hinds00choosingwork} and that distributed teams perform better when members know each other~\cite{DBLP:conf/cscw/CummingsK08}. Furthermore, research has shown that specific communication patterns amongst team members are critical for performance~\cite{DBLP:conf/chi/CataldoE12}. Thus, in addition to factors such as skill level, maintaining the same or better level of familiarity and communication amongst team members before and after someone leaves should reduce the impact of the departure. In other words, for team member replacement, the similarity between individuals should be measured in the context of the team itself. More specially, a good team member replacement should meet the following two requirements. First ({\em skill matching}), the new member should bring a similar skill set as the current team member to be replaced. Second ({\em structure matching}), the new member should have a similar network structure as the current team member in connecting the rest team members. 

\begin{figure*}[!htb]
\centering
\includegraphics[width=\textwidth]{./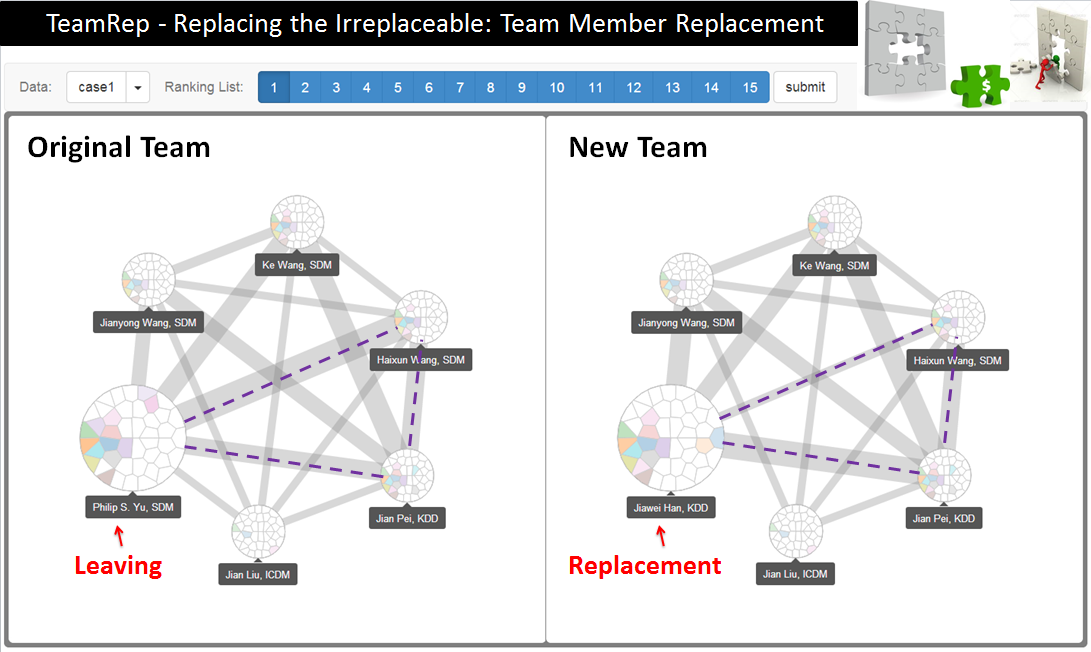}
\vspace{-10pt}
\caption{The team graphs before and after {\em Jiawei Han} takes {\em Philip S. Yu}'s place. See subsection 5.2 for detailed explanations.}
\label{fig:example}
\vspace{-10pt}
\end{figure*}

Armed with this conjecture, we formally formulate the \teamrep problem by the notation of graph similarity/kernel. By modeling the team as a labeled graph, graph kernel/similarity provides a natural way to capture both the skill and structural match.  However, for a network with $n$ individuals, a straight-forward method would require $O(n)$ graph kernel computations for one team member replacement, which is computationally intractable. For example, for the {\em DBLP} dataset with almost 1M users (i.e., $n\approx 1,000,000$), we found that it would take 6,388s to find one replacement for a team of size 10. To address the computational challenges, we propose a family of fast algorithms by carefully designing the pruning strategies and exploring the smoothness between the existing and the new teams. We perform the extensive experimental evaluations to demonstrate the effectiveness and efficiency of our methods. Specially, we find that (1) by encoding both the skill and structural matching, it leads to a much better replacement result. Compared with the best alternative choices, we achieve 27\% and 24\% {\em net increase} in average recall and precision, respectively; (2) our fast algorithms are orders of magnitude faster and scale {\em sub-linearly}. For example, our pruning strategy alone leads up to 1,709$\times$ speed-up, without sacrificing any accuracy.

The main contributions of this paper are as follows.
\begin{itemize}
\setlength{\itemsep}{0.05\baselineskip}
\item [1] \textbf{Problem Definitions.} We formally define the \teamrep\ problem, to recommend a good candidate after a team member is unavailable in the context of networks.

\item [2] \textbf{Algorithms and Analysis.} We propose a family of effective and scalable algorithms for \teamrep; and analyze its correctness and complexity.

\item [3] {\bf Experimental Evaluations}. We perform extensive experiments on real world datasets, to validate the effectiveness and efficiency of our methods. (See an example in Figure~\ref{fig:example}.)
\end{itemize}

The rest of the paper is organized as follows.  Section 2 formally defines \teamrep\ problem. Section 3 presents our basic solutions. Section 4 addresses the computational challenges. Section 5 presents the experimental results. Section~\ref{sec:rel} reviews some related work. Section 7 concludes the paper.

\hide{
In  teamwork, attrition is a common problem where one member of the team working on a project cannot fulfill their responsibilities anymore and a new member is needed to fill the vacancy. The team can be researchers working on the same paper and one of the authors is not available to complete the paper anymore. In a company, the team can be the sales division and one employee resigns or retires. For NBA games, the team can be Los Angeles Lakers and Kobe Bryant tears his Achilles. For Hollywood, the team can be the film crew of \emph{Iron Man 3} and Robert Downey, Jr. injured his ankle. The list continues. In general, we consider the following novel problem: Given a social network $G$ capturing the relationship of $N$ individuals with different skills, a team $T$ of people embedded in $G$, and a member $i$ in $T$ who will leave the team, find a best candidate $t$ in the rest network who can fulfill $i$'s role in $T$.

To the best of our knowledge, no prior effort has been devoted to this research problem. The challenge comes from three aspects:  (1) The candidate $t$ should have similar skills to member $i$'s in the context of team $T$, \emph{i.e.,} skills required by the task $T$ works on. (2) The network structures of team $T$ before and after candidate $t$ takes $i$'s place should be similar. This is important because having the required skills is not sufficient, good communication to the rest team members is also vital for completing the task. (3) The candidates pool is generally very huge, \emph{e.g.,} there are around 920,000 authors in DBLP data set, how to efficiently select the best candidate is critical for quick decision making.

In this paper, we propose to use random walk graph kernel~\cite{DBLP:conf/nips/VishwanathanBS06,vishwanathan:graphkernels} for measuring the similarity between the graph structures of team $T$ before and after candidate $t$ joins the team. Random walk graph kernel measures the similarity of two graphs by performing random walks on both and count the number of matching walks. Graphs with similar structure, \emph{i.e.,} isomorphic edges exist between isomorphic nodes, will have higher kernel measures. It's appealing as labels can be naturally incorporated into random walk graph kernel. As a result, skills of each graph node can be added as node labels, and the graph kernels can be applied on the labeled graphs. Nonetheless, simply computing the graph kernel for each of the candidates in the huge pool is terribly time consuming. Observing that the two graphs of the team $T$ before and after candidate $t$ fills-in, the graphic difference is only one node alongside the incident edges, we propose xxx algorithm for computing the graph kernels efficiently in the condition of a large candidate pool. To make the computation more efficient when the team size is large, we further leverage the low rank structure of the graphs~\cite{kang:fastwalk} in our algorithm.
}

%
%

\section{Problem Definitions}

\begin{table}[!t]
\caption{Table of symbols}
\centering
\vspace{-8pt}
\begin{tabular}{|c|l|}
\hline
\bf{Symbols}                & \bf{Definition}\\
\hline \hline
$\mat{G}:=\{\mat{A}, \mat{L} \}$ & the entire social network \\
$\mat{A}_{n \times n}$ & the adjacency matrix of $\mat{G}$\\
$\mat{L}_{n \times l}$ & skill indicator matrix\\ \hline
$\mathcal{T}$               & the team member index \\
$\mat{G}(\mathcal{T})$ & the team network indexed by its members $\mathcal{T}$\\\hline
$d_i$				   & the degree of the $i^{\textrm{th}}$ node in $\mat A$\\
$l$ &the total number of skills \\
$t$                  & the team size, i.e., $t = |\mathcal{T}|$\\
$n$                  & the total number of individuals in $\mat A$\\
$m$                  & the total number of connections in $\mat A$ \\ \hline
\end{tabular}
\label{tab:symbol}
\end{table}

\hide{
\begin{figure*}[htp]
\centering
\begin{subfigure}[b]{0.45\textwidth}
\includegraphics[width=\textwidth]{figures/structure-exp-1}
\caption{Original graph}
\label{fig:structure_before}
\end{subfigure}
~
\begin{subfigure}[b]{0.45\textwidth}
\includegraphics[width=\textwidth]{figures/structure-exp-2}
\caption{Graph after replacement}
\label{fig:structure_after}
\end{subfigure}
\caption{Visualization of team graphs before and after Jiawei Han takes Philip S. Yu's place.}
\label{fig:structure}
\end{figure*}
}


Table~\ref{tab:symbol} lists the main symbols used throughout this paper. We describe the $n$ individuals by an labelled social network $\mat{G}:=\{\mat{A}, \mat{L} \}$, where $\mat{A}$ is an $n \times n$ adjacency matrix characterizing the connectivity among different individuals; and $\mat L$ is $n \times l$ skill indicator matrix. The $i^{\textrm{th}}$ row vector of $\mat L$ describes the skill set of the $i^{\textrm{th}}$ individual. For example, suppose there are only three skills in total, including \{{\em data mining, databases, information retrieval}\}. Then an individual with a skill vector $[1,1,0]$ means that s/he has both {\em data mining} and {\em databases} skills but no skill in terms of {\em information retrieval}. Also, we represent
the elements in a matrix using a convention similar to Matlab, e.g.,
$\mat{A}(i,j)$ is the element at the $i^{\mathrm{th}}$ row and
$j^{\mathrm{th}}$ column of the matrix $\mat{A}$, and $\mat{A}(:,j)$
is the $j^{\mathrm{th}}$ column of $\mat{A}$, etc.

We use the calligraphic letter $\mathcal{T}$ to index the members of a team, which includes a subset of $t = |\mathcal{T}|$ out-of $n$ individuals. Correspondingly, we can represent the team by another labelled team network $\mat{G}(\mathcal{T}):=\{\mat{A}(\mathcal{T}, \mathcal{T}), \mat{L}(\mathcal{T},:)\}$. Note that $\mat{A}(\mathcal{T}, \mathcal{T})$ and $\mat{L}(\mathcal{T},:)\}$ are sub-matrices of $\mat A$ and $\mat L$, respectively. If we replace an existing member $p \in \mathcal{T}$ of a given team $\mathcal{T}$ by another individual $q \notin \mathcal{T}$, the new team members are indexed by $\mathcal{T}_{p\rightarrow q}:=\{\mathcal{T}/p, q\}$; and the new team is represented by the  labelled network $\mat{G}(\mathcal{T}_{p \rightarrow q})$.

With the above notations and assumptions, our problems can be
formally defined as follows:

\begin{problem} {\teamrep}\label{prob:def1}
\begin{description}
\itemsep -2.5pt
\item[Given:] (1) A labelled social network $\mat{G}:=\{\mat{A}, \mat{L} \}$, (2) a team $\mat{G}(\mathcal T)$, and (3) a team member $p \in \mathcal{T}$;
\item[Output:] A ``best" alternate $q \notin \mathcal{T}$ to replace the person $p$'s role in the team $\mat{G}(\mathcal T)$.
\end{description}
\end{problem}

\hide{among themselves network given the weight matrix $\mat{G}_{N \times N}$ of the whole weighted and undirected social network, where $N$ is the number of the people in the network. Collectively, we call it as the social network  The weight of the network edges reflects the level of interactions between the two connected nodes,\emph{e.g.}, the number of papers two researchers have co-authored. For each of the individual in the network, we are also given their skills/expertise, \emph{e.g.}, Jiawei Han is renowned in data mining and database systems. We encode the skill information in skill indicator matrix $\mat{S}_{N \times d_n}$, where $d_n$ is the number of skills in the skill set and $\mat{S}_{i,j}$ represents the strength of $i$-th person possessing the $j$-th skill. A team $\mathcal{T}$ of size $n$ is a subset of people in the network and are working on the same project. For some reason $\mathcal{T}_i$, the $i$-th person in $\mathcal{T}$, is leaving the team and we need to find a new member who is capable of taking $\mathcal{T}_i$'s place from the remaining network.

Let's take the social network of researchers in computer science as example. The weight matrix $\mat{G}$ of the whole network can be constructed by counting how many papers each pair of authors in DBLP dataset have co-authored. Suppose the team $\mathcal{T}$ is the authors writing paper~\cite{DBLP:conf/kdd/WangB11} and the skill set is \{data mining, machine learning, information retrieval\}. The row in $\mat{S}$ corresponding to Chong Wang may be $[1,1,0]$ and the row for David Blei may be $[0,1,1]$. Assuming that before finishing the paper David became unavailable, then Chong may consider finding another researcher to complete the paper.}

\section{Proposed Solutions}




In this section, we present our solution for Problem~\ref{prob:def1}. We start with the design objectives for the \teamrep\ problem, present graph kernel as the basic solution to fulfill such design objectives; and finally analyze the main computational challenges.

\subsection{Design Objectives}
Generally speaking, we want to find a {\em similar} person $q$ to replace the current team member $p$ who is about to leave the team. That is, a good replacement $q$ should not only have a similar skill set as team member $p$; but also would maintain the good chemistry of the team so that the whole team can work together harmonically and/or be less disrupted. In other words, the similarity between individuals should be measured in the context of the team itself. Often, the success of a team largely depends on the successful execution of several sub-tasks, each of which requires the cooperation among several team members with certain skill configurations. For example, several classic tactics often recurringly find themselves in a successful NBA team, including (a) {\em triangle offense} (which is featured by a sideline triangle created by {\em the center}, {\em the forward}, and {\em the guard}), (b) {\em pick and roll} (which involves the cooperation between two players - one plays as `pivot' and the other plays as `screen', respectively), etc. Generally speaking, team performance arises from the shared knowledge and experience amongst team members and their ability to share and coordinate their work. As noted in the introduction, a specific pattern of communication, is associated with higher team performance.  Maintaining that communication structure should therefore be less disruptive to the team.

If we translate these requirements into the notations defined in Section 2, it naturally leads to the following two design objectives for a good \teamrep. First ({\em skill matching}), the new member should bring a similar skill set as the current team member $p$ to be replaced that are required by the team. Second ({\em structural matching}), the new member should have a similar network structure as team member $p$ in connecting the rest team members.

\subsection{Basic Solutions}

In order to fulfill the above two design objectives, we need a similarity measure between two individuals in the context of the team itself, that captures both skill matching and the structural matching. We refer to this kind of similarity as {\em team context aware similarity}. Mathematically, the so-called graph kernel defined on the current and new teams provides a natural tool for such a team context aware similarity. That is, we want to find a replacement person $q$ as
\begin{eqnarray}\label{eq:teamsim}
q = \textrm{argmax}_{j,j\notin {\mathcal{T}}}~~\textrm{Ker}(\mat{G}({\cal {T}}),\mat{G}(\mathcal{T}_{p \rightarrow j}))
\end{eqnarray}
In Eq.~\eqref{eq:teamsim}, $\mat{G}({\cal {T}})$ is the labelled team graph; and $\mat{G}(\mathcal{T}_{p \rightarrow j})$ is the labelled team graph after we replace a team member $p$ by another individual $j$; and $\textrm{Ker}(~.~)$ is the kernel between these two labelled graphs. Generally speaking, the basic idea of various graph kernels is to compare the similarity of the {\em sub-graphs} between the two input graphs and then aggregate them as the overall similarity between the two graphs. Let us explain the intuition/rationality of why graph kernel is a natural choice for team context aware similarity. Here, each subgraph in a given team (e.g., the dashed triangles in Figure~\ref{fig:example}) might reflect a specific skill configuration among a sub-group of team members that is required by a certain sub-task of that team. By comparing the similarity between two subgraphs, we implicitly measure the capability of the individual $j$ to perform this specific sub-task. Thus, by aggregating the similarities of all the possible subgraphs between the two input graphs/teams, we get a goodness measure of the overall capability of the individual $j$ to perform all the potential sub-tasks that team member $p$ is involved in the original team. Note that the team replacement scenario is different from team formation~\cite{DBLP:conf/kdd/LappasLT09,DBLP:conf/www/AnagnostopoulosBCGL12,DBLP:conf/www/RangapuramBH13}. These existing work on team formation aims to build a team from scratch by optimizing some pre-chosen metric (e.g., compatibility, diversity, etc). In contract, we aim to find a new team member such that the new team resembles the original team as much as possible.

Having this in mind, many of the existing graph kernels can be adopted in Eq.~\eqref{eq:teamsim}, such as random walk based graph kernel, sub-tree based graph kernels (See section~\ref{sec:rel} for a review). In this paper, we focus on random walk based graph kernel due to its mathematical elegancy and superior empirical performance. Given two labelled graphs $\mat{G}_i:=\{\mat{A}_i,\mat{L}_i\}$, $i=1,2$, the random walk based graph kernel between them can be formally computed as follows~\cite{DBLP:conf/nips/VishwanathanBS06}:
\begin{eqnarray}
\textrm{Ker}(\mat{G}_1,\mat{G}_2) = \mat{y}' (\mat{I} - c \mat{A}_{\times}) ^{-1}\mat{L}_{\times} \mat{x}
\label{eq:original}
\end{eqnarray}
where $\mat{A}_{\times}=\mat{L}_{\times} (\mat{A}'_1 \otimes \mat{A}'_2)$ is the weight matrix of the two graphs' Kronecker product, $\otimes$ represents the Kronecker product between two matrices, $c$ is a decay factor, $\mat{y}=\mat{y}_1 \otimes \mat{y}_2$ and $\mat{x}=\mat{x}_1 \otimes \mat{x}_2$ are the so-called starting and stopping vectors to indicate the weights of different nodes and are set uniform in our case, $\mat{L}_{\times}$ is a diagonal matrix where $\mat{L}_{\times}(i,i) = 0$ if the $i^{\mathrm{th}}$ row of $(\mat{A}'_1 \otimes \mat{A}'_2)$ is zeroed out due to label inconsistency of two nodes of the two graphs. $\mat{L}_{\times}$ can be expressed as $\mat{L}_{\times} = \sum_{k=1}^{l} \mathrm{diag}(\mat{L}_1(:,k)) \otimes \mathrm{diag}(\mat{L}_2(:,k))$.


\subsection{Computational Challenges}

Eq.\eqref{eq:original} naturally suggests the following procedure for solving \teamrep\ problem (referred to as \teamrepalgbasic): for each individual $j\notin \mathcal{T}$, we compute its score $\textrm{score}(j)$ by Eq.\eqref{eq:original}; and recommend the individual(s) with the highest score(s). However, this strategy (\teamrepalgbasic) is computationally intensive since we need to compute {\em many} random walk based graph kernels and each of such computations could be expensive especially when the team size is large. To be specific, for a team $\mathcal T$ of size $t$ and a graph $\mat G$ with $n$ individuals in total, its time complexity is $O(nt^6)$ since we need to compute a random walk based graph kernel for each candidate who is not in the current team, each of which could cost $O(t^6)$~\cite{DBLP:conf/nips/VishwanathanBS06} \hh{Liangyue: could you please double-check Vish's NIPS 06 paper as well as sec 2.2 of u kang's paper - i think for exact method, they have cheaper algorithms that are better than $O(t^6)$}. Even if we allow some approximation in computing each of these graph kernels, the best known algorithms (i.e., by ~\cite{kang:fastwalk}) would still give an overall time complexity as  $O(n(lt^2r^4 + mr + r^6))$, where $r$ is reduced rank after low rank approximation, \hh{liangyue: pls fill in} which is still too high. For example, on the {\em DBLP} dataset with 916,978 authors, for a team with 10 members, it would take 6,388s to find a best replacement.

In the next section, we present our solution to remedy these computational challenges.

\hide{

In this subsection, we describe our method addressing the above TEAM REPLACEMENT problem. A good candidate should have the similar communication pattern to the rest team members as $\mathcal{T}_i$ to them alongside having the similar expertise. If we visualize the team graphs before and after the candidate takes $\mathcal{T}_i$'s place, the structure of the two graphs should be similar. For example, Figure~\ref{fig:structure_before} illustrates the co-author network of six researchers working on the same paper published in ICDM 2005, and Figure~\ref{fig:structure_after} is the same graph but with Philip S. Yu being replaced by Jiawei Han. We consider this as a good recommendation as Jiawei Han has very similar connections to the other authors as Philip does and they are both renowned researchers in data mining.

Graph kernel is a measure of how similar the structures of two graphs are. It works on both labeled and unlabeled graphs. As argued above, a good candidate should result in a high kernel between the graphs before and after the candidate fills the vacant. Let $\mat{G}({\mathcal {T}},{\mathcal {T}})$ be the induced subgraph/submatrix by team members in $\cal T$, $\mathcal{T}'_{\mathcal{T}_i \rightarrow j}$ be the new team by replacing person $\mathcal{T}_i$ by person $j$ from the network, and $\mat{G}(\mathcal{T}'_{\mathcal{T}_i \rightarrow j},\mathcal{T}'_{\mathcal{T}_i \rightarrow j})$ be the corresponding subgraph. A good candidate $t$ to replace $\mathcal{T}_i$ is:

\begin{eqnarray}
t = \textrm{argmax}_{j,j\notin {\cal{T}}}~~\textrm{Ker}(\mat{G}({\cal {T}},{\cal {T}}),\mat{G}(\mathcal{T}'_{\mathcal{T}_i \rightarrow j},\mathcal{T}'_{\mathcal{T}_i \rightarrow j}))
\end{eqnarray}

\noindent where $\textrm{Ker}(~.~)$ is the kernel between two graphs, which provides a natural way to measure the structural similarity/equivalence between two input graphs. One popular choice of such kernel is random walk graph kernel~\cite{DBLP:conf/nips/VishwanathanBS06,kang:fastwalk}, which will be reviewed in the next subsection.

Applying graph kernel enjoys several advantages as follows:

\begin{itemize}
\item [1.] As graph kernel measures the structural similarity between two graphs/teams (e.g., whether or not there is a loop/star in both graphs, etc), it can provide a natural way in our scenario, where we want the new team (after replacing person $\mathcal{T}_i$ by person $j$) has a similar connectivity structure as the original team.
\item [2.] If each person only has one skill (i.e., each row of the skill matrix $\mat S$ has only one `1' and `0's for all others), graph kernel also provides a natural way to incorporate the skill similarity into the structural similarity, as both random walk based graph kernel and sub-tree graph kernel can be generalized to labeled graphs (in our case, the label of the nodes is the skill of the corresponding person).
\item [3.] It seems that this formulation also provide a natural solution for {\em team shrinking}, \emph{i.e.,} if we want to reduce the team size by 1 (say to save cost), who shall we select so that the remaining team is most similar as the original one?
\end{itemize}


\subsection{Random Walk Graph Kernel}

In this paper, we adopt the random walk graph kernel~\cite{kang:fastwalk} to compute the kernel between the graphs before and after replacement as it can integrate well with the node labels. Basically, random walk graph kernel counts the number of common walks on the two graphs. Two walks are considered the same if their lengths are the same and the label sequences along the walks are the same ( for labeled graph). Let $p_1$ and $p_2$ be the starting probability of the walks on the two graphs, and $q_1$ and $q_2$ be the stopping probability. Denote the weight matrices of graphs $\mat{G}({\cal {T}},{\cal {T}})$ and $\mat{G}(\mathcal{T}'_{\mathcal{T}_i \rightarrow j},\mathcal{T}'_{\mathcal{T}_i \rightarrow j})$ by $\mat{W}_1$ and $\mat{W}_2$, then the random walk graph kernel is computed as:

\begin{equation}
\begin{array}{l}
\textrm{Ker}(\mat{G}({\cal {T}},{\cal {T}}),\mat{G}(\mathcal{T}'_{\mathcal{T}_i \rightarrow j},\mathcal{T}'_{\mathcal{T}_i \rightarrow j}))\\
=\sum_{l=0}^{\infty} c(q_1 \otimes q_2)^T (\mat{W}_1^T \otimes \mat{W}_2^T)^{l} (p_1\otimes p_2) \\
=(q_1\otimes q_2)^T (\mat{I}-c(\mat{W}_1^T \otimes \mat{W}_2^T))^{-1}(p_1 \otimes p_2)\\
= q^T (\mat{I} - c \mat{W}) ^{-1} p,\\
\end{array}
\label{eq:original}
\end{equation}
where $\mat{W}=\mat{W}_1^T \otimes \mat{W}_2^T$ is the weight matrix of the two graphs' product graph, $c$ is a decay factor, $p=p_1 \otimes p_2$ and $q = q_1 \otimes q_2$.

However, if we compute the graph kernel for each of the candidate in the network using Eq.~\ref{eq:original}, the run time would be $O(Nn^6)$ since Eq.~\ref{eq:original} involves of inversion of a matrix with size $n^2$ by $n^2$. For most cases, $N$ is very large, so even for a small team the time cost is not tolerable. Inspecting the two graphs before and after the new candidate joins in more carefully, we find that only one node and edges incident to it have changed. Using this observation can save us a lot of time as seen in the next section on our proposed fast solutions.
}

\section{Scale-up and Speed-up}
In this section, we address the computational challenges to scale-up and speed-up \teamrepalgbasic. We start with an efficient pruning strategy to reduce the number of graph kernel computations; and then present two algorithms to speed-up the computation of individual graph kernel. 

\subsection{Scale-up: Candidate Filtering}

Here, we propose an efficient pruning strategy to filter out those unpromising candidates. Recall that one of our design objectives for a good \teamrep is {\em structural matching}, i.e., the new member has a similar network structure as team member $p$ in connecting the rest team members. Since $p$ is connected to at least some of the rest members, it suggests that if an individual does not have any connection to any of the rest team members, s/he might not be a good candidate for replacement.

{\em Pruning Strategy:} Filter out all the candidates who do not have any connections to any of the rest team members.

\begin{lemma}\label{lm:pruning}
{\em Effectiveness of Pruning}. For any two persons $i$ and $j$ not in $\mathcal{T}$,  if $i$ is connected to at least one member in $\mathcal{T}/p$ and $j$ has no connections to any of the members in $\mathcal{T}/p$, we have that
\begin{eqnarray}
\textrm{Ker}(\mat{G}({\cal {T}}),\mat{G}(\mathcal{T}_{p \rightarrow i})) \geq \textrm{Ker}(\mat{G}({\cal {T}}),\mat{G}(\mathcal{T}_{p \rightarrow j})). \nonumber
\end{eqnarray}
\end{lemma}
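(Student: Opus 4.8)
The plan is to compare the two graph kernels term by term using the series expansion of the kernel rather than the closed-form inverse. Recall from Eq.~\eqref{eq:original} that
\begin{eqnarray}
\textrm{Ker}(\mat{G}_1,\mat{G}_2) = \mat{y}'(\mat{I}-c\mat{A}_{\times})^{-1}\mat{L}_{\times}\mat{x} = \sum_{k=0}^{\infty} c^k\,\mat{y}'(\mat{A}_{\times})^{k}\mat{L}_{\times}\mat{x}, \nonumber
\end{eqnarray}
which converges for a suitable decay factor $c$. Each term $\mat{y}'(\mat{A}_{\times})^{k}\mat{L}_{\times}\mat{x}$ counts (weighted, label-matched) common walks of length $k$ on the product graph of the original team and the candidate team. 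Since $\mat{A}_{\times}$ is built from $\mat{A}'_1\otimes\mat{A}'_2$ and the weights, starting/stopping vectors, and $\mat{L}_{\times}$ all have nonnegative entries, every term in this series is nonnegative. So the strategy is to show the inequality holds termwise: for every walk length $k$, the candidate $i$ contributes at least as much as the candidate $j$.

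First I would fix notation for the two comparison teams. Observe that $\mat{G}(\mathcal{T}_{p\rightarrow i})$ and $\mat{G}(\mathcal{T}_{p\rightarrow j})$ differ from the original team graph $\mat{G}(\mathcal{T})$ only in the single row/column corresponding to the replaced slot $p$: the induced subgraph on $\mathcal{T}/p$ is identical in all three graphs, and only the edges incident to the new member (to $i$ or to $j$ respectively) change. The key observation is that since $j$ has no connection to any member of $\mathcal{T}/p$, the node occupying the $p$-slot in $\mat{G}(\mathcal{T}_{p\rightarrow j})$ is \emph{isolated} within the team graph. Consequently, on the product graph with $\mat{G}(\mathcal{T})$, any random walk that ever visits a product-node involving this isolated node is immediately trapped (no outgoing edge), so it cannot contribute to common walks that also run on $\mat{G}(\mathcal{T})$, where $p$ is connected.

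The main step is then to set up an explicit, entrywise correspondence between the walk-counting terms for the two candidates. I would argue that the common walks on $(\mat{G}(\mathcal{T}),\mat{G}(\mathcal{T}_{p\rightarrow j}))$ are exactly those that stay within the shared $\mathcal{T}/p$ substructure (because any step touching the $p$-slot requires a matching edge on both sides, which is absent on the $j$-side), and that every such walk is \emph{also} a common walk on $(\mat{G}(\mathcal{T}),\mat{G}(\mathcal{T}_{p\rightarrow i}))$ carrying an identical weight. Since candidate $i$ has at least one connection into $\mathcal{T}/p$, the $i$-side can additionally support common walks that do traverse the $p$-slot, and these contribute further nonnegative terms. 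Summing over all walk lengths and applying nonnegativity of every term yields the desired inequality.

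I expect the main obstacle to be making the ``isolated node kills the walk'' argument fully rigorous at the level of the Kronecker-product matrices and the label matrix $\mat{L}_{\times}$, rather than just intuitively on walks. Concretely, the delicate point is to show that zeroing out the edges incident to the $p$-slot on the $j$-side forces the corresponding blocks of $(\mat{A}_{\times})^k$ to vanish against the starting/stopping vectors, so that the $j$-kernel equals the restricted kernel computed on the shared part, and then to verify that this restricted quantity appears as a subset of the nonnegative terms in the $i$-kernel. Handling the bookkeeping of which product-graph nodes survive the label-consistency filter $\mat{L}_{\times}$, and confirming the starting/stopping vectors $\mat{x},\mat{y}$ are uniform and hence do not break the entrywise domination, is where the care is needed; once nonnegativity and the term-subset inclusion are established, the inequality follows immediately.
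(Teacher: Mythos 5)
Your proposal is correct and follows essentially the same route as the paper's proof: expand the kernel as a nonnegative power series and show termwise domination, using the fact that the candidate $j$'s node is isolated within the team so its adjacency (and hence label-filtered) contribution is entrywise dominated by candidate $i$'s. The matrix-level bookkeeping you flag as the remaining obstacle is exactly what the paper supplies, via the entrywise inequality $\mat{L}_{\times 1}(\mat{A}'_0\otimes\mat{A}'_1)\ge\mat{L}_{\times 2}(\mat{A}'_0\otimes\mat{A}'_2)$ established from the Kronecker mixed-product property and then propagated to all powers by induction.
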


\begin{proof}
Suppose that $\mat{G}(\mathcal{T}):=\{\mat{A}_0, \mat{L}_0\}$. Let $\mat{G}(\mathcal{T}_{p \rightarrow i}):=\{\mat{A}_1, \mat{L}_1\}$, and $\mat{G}(\mathcal{T}_{p \rightarrow j}):=\{\mat{A}_2, \mat{L}_2\}$.

By Taylor expansion of Eq.~\eqref{eq:original}, we have

$\textrm{Ker}(\mat{G}({\cal {T}}),\mat{G}(\mathcal{T}_{p \rightarrow i})) = \sum_{z=0}^{\infty} c \mat{y}' (\mat{L}_{\times 1}(\mat{A}'_0 \otimes \mat{A}'_1))^z \mat{x}$, where $\mat{L}_{\times 1} = \sum_{k=1}^{l} \mathrm{diag}(\mat{L}_0(:,k)) \otimes \mathrm{diag}(\mat{L}_1(:,k))$,

$\textrm{Ker}(\mat{G}({\cal {T}}),\mat{G}(\mathcal{T}_{p \rightarrow j})) = \sum_{z=0}^{\infty} c \mat{y}' (\mat{L}_{\times 2}(\mat{A}'_0 \otimes \mat{A}'_2))^z \mat{x}$, where $\mat{L}_{\times 2} = \sum_{k=1}^{l} \mathrm{diag}(\mat{L}_0(:,k)) \otimes \mathrm{diag}(\mat{L}_2(:,k))$.

Therefore, it is sufficient to show that $(\mat{L}_{\times 1}(\mat{A}'_0\otimes \mat{A}'_1))^z \ge (\mat{L}_{\times 2}(\mat{A}'_0 \otimes \mat{A}'_2))^z$ for any $z>0$, where two matrices $\mat{A} \ge \mat{B} $ if  $\mat{A}_{ij} \ge \mat{B}_{ij}$ holds for all possible $(i,j)$. We prove this by induction.

(Base Case of Induction) When $z=1$, we have
\begin{equation}
\begin{array}{l}
\mat{L}_{\times 1}(\mat{A}'_0 \otimes \mat{A}'_1)\\
=(\sum_{k=1}^{l} \mathrm{diag}(\mat{L}_0(:,k)) \otimes \mathrm{diag}(\mat{L}_1(:,k)))(\mat{A}'_0 \otimes \mat{A}'_1)\\
=\sum_{k=1}^{l} (\mathrm{diag}(\mat{L}_0(:,k)) \mat{A}'_0) \otimes (\mathrm{diag}(\mat{L}_1(:,k)) \mat{A}'_1)
\end{array}
\end{equation}

Because  $(\mathrm{diag}(\mat{L}_1(:,k)) \mat{A}'_1)\geq (\mathrm{diag}(\mat{L}_2(:,k)) \mat{A}'_2)$, we have $\mat{L}_{\times 1}(\mat{A}'_0 \otimes \mat{A}'_1)\ge \mat{L}_{\times 2}(\mat{A}'_0 \otimes \mat{A}'_2)$.

(Induction Step) Assuming $(\mat{L}_{\times 1}(\mat{A}'_0 \otimes \mat{A}'_1))^{z-1} \geq (\mat{L}_{\times 2}(\mat{A}'_0 \otimes \mat{A}'_2))^{z-1}$, we have that
\begin{eqnarray}
(\mat{L}_{\times 1}(\mat{A}'_0 \otimes \mat{A}'_1))^{z} &\geq& (\mat{L}_{\times 2}(\mat{A}'_0 \otimes \mat{A}'_2))^{z-1} (\mat{L}_{\times 1}(\mat{A}'_0 \otimes \mat{A}'_1)) \nonumber \\ &\geq& (\mat{L}_{\times 2}(\mat{A}'_0 \otimes \mat{A}'_2))^{z} \nonumber
\end{eqnarray}
\hh{let use prime for matrix transpose}
\nonumber where the first inequality is due to the induction~assumption; and the second inequality is due to the base case. This completes the proof.
\end{proof}
\vspace{-8pt}

{\em Remarks}. By Lemma~\ref{lm:pruning}, our pruning strategy is `safe', i.e., it will not miss any potentially good replacements. In the meanwhile,  we can reduce the number of graph kernel compuations from $O(n)$ to $O(\sum_{i\in \mathcal{T}/p}d_i)$, which is sub-linear in $n$.

\subsection{Speedup Graph Kernel - Exact Approach}


Here, we address the problem of speeding up the computation of an individual graph kernel. Let $\mat{G}(\mathcal{T}) := \{ \mat{A}_1, \mat{L}_1 \}$ and $\mat{G}(\mathcal{T}_{p \rightarrow q}) := \{ \mat{A}_2, \mat{L}_2 \}$, where $\mat{A}_1,\mat{A}_2$ are symmetric {adjacency matrices of the two graphs}.\footnote{Although we focus on the undirected graphs in this paper, our proposed algorithms can be generalized to directed graphs.} Without loss of generality, let us assume that $p$ is the last team member in $\mathcal{T}$. Compare $\mat{A}_1$ with $\mat{A}_2$, it can be seen that the only difference is their last columns and last rows. Therefore, we can rewrite $\mat{A}_2$ as $\mat{A}_2 = \mat{A}_c + \mat{A}_{d2}$, where $\mat{A}_c$ is $\mat{A}_1$ with its last row and column being zeroed out, and the nonzero elements of $\mat{A}_{d2}$  only appear in its last row and column reflecting the connectivity of $q$ to the new team. Notice that $\mat{A}_{d2}$ has a rank at most 2, so it can be factorized into two smaller matrices as $\mat{A}_{d2} = \mat{E}_{t \times 2} \mat{F}_{2 \times t}$.

Denote $\mathrm{diag}(\mat{L}_1(:,j))$ as $\mat{L}_1^{(j)}$  and $\mathrm{diag}(\mat{L}_2(:,j))$ as $\mat{L}_2^{(j)}$ for $j=1,...,l$. Compare $\mat{L}_1^{(j)}$ with $\mat{L}_2^{(j)}$, the only difference is the last diagonal element. Therefore, we can write $\mat{L}_2^{(j)}$ as $\mat{L}_2^{(j)} = \mat{L}_c^{(j)} + \mat{L}_{d2}^{(j)}$, where $\mat{L}_c^{(j)}$ is $\mat{L}_1^{(j) }$ with last element zeroed out, and $\mat{L}_{d2}^{(j)}$'s last element indicates $q$'s strength of having the $j^{\mathrm{th}}$ skill. $\mat{L}_2^{(j)}$'s rank is at most 1, so it can be factorized as $\mat{L}_2^{(j)}=\mat{e}_{t \times 1}^{(j)}\mat{f}_{1 \times t}^{(j)}$. Therefore, the exact graph kernel for the labelled graph can be computed as:
\small
\vspace{-5pt}
\begin{equation}\label{eq:faxtexact:1}
\begin{array}{l}
\mathrm{Ker}(\mat{G}(\mathcal{T}), \mat{G}(\mathcal{T}_{p\rightarrow q} ))\\
=\mat{y}' (\mat{I} - c(\sum_{j=1}^l \mat{L}_1^{(j)} \otimes \mat{L}_2^{(j)})(\mat{A}_1' \otimes \mat{A}_2') )^{-1}(\sum_{j=1}^l \mat{L}_1^{(j)} \otimes \mat{L}_2^{(j)}) \mat{x}\\
=\mat{y}' (\underbrace{\mat{I} - c(\sum_{j=1}^l \mat{L}_1^{(j)} \otimes \mat{L}_c^{(j)})(\mat{A}_1 \otimes \mat{A}_c )}_{\mat{Z}:~\textrm{invariant w.r.t.}~q~}\\
-c \underbrace{(\sum_{j=1}^l (\mat{L}_1^{(j)} \otimes \mat{e}^{(j)})(\mat{I} \otimes \mat{f}^{(j)}))(\mat{A}_1 \otimes \mat{A}_c)}_{\mat{PQ}(\mat{A}_1\otimes\mat{A}_c)=\mat{P}\mat{Y}_1}\\
-c \underbrace{(\sum_{j=1}^l \mat{L}_1^{(j)} \otimes \mat{L}_c^{(j)})(\mat{A}_1 \otimes \mat{E})(\mat{I} \otimes \mat{F})}_{\mat{X}_1\mat{Y}_2}\\
-c \underbrace{(\sum_{j=1}^l (\mat{L}_1^{(j)} \otimes \mat{e}^{(j)})(\mat{I} \otimes \mat{f}^{(j)}))(\mat{A}_1 \otimes \mat{E})(\mat{I} \otimes \mat{F})}_{\mat{X}_2\mat{Y}_2} )^{-1}\\
(\sum_{j=1}^l \mat{L}_1^{(j)} \otimes \mat{L}_2^{(j)}) \mat{x}
\end{array}
\vspace{-1pt}
\end{equation}

\normalsize
Each $\mat{L}_1^{(j)} \otimes \mat{e}^{(j)}$ is a matrix of size $t^2$ by $t$ and $\mat{I} \otimes \mat{f}^{(j)}$ is a matrix of size $t$ by $t^2$. We denote the matrix created by concatenating all  $\mat{L}_1^{(j)} \otimes \mat{e}^{(j)}$ horizontally as $\mat{P}$, \emph{i.e.}, $\mat{P}=[\mat{L}_1^{(1)} \otimes \mat{e}^{(1)},\ldots
, \mat{L}_1^{(l)} \otimes \mat{e}^{(l)}]$; denote the matrix created by stacking all  $\mat{I} \otimes \mat{f}^{(j)}$ vertically as $\mat{Q}$, \emph{i.e.}, $\mat{Q}=[\mat{I} \otimes \mat{f}^{(1)};\ldots;\mat{I} \otimes \mat{f}^{(l)}]$. Obviously, $(\sum_{j=1}^{l}  (\mat{L}_1^{(j)}\otimes \mat{e}^{(j)})(\mat{I}\otimes \mat{f}^{(j)}))$ is equal to $\mat{P}\mat{Q}$. We denote $(\sum_{j=1}^{l} \mat{L}_1^{(j)} \otimes \mat{L}_c^{(j)})(\mat{A}_1 \otimes \mat{E})$ by $\mat{X}_1$; denote  $(\sum_{j=1}^{l} (\mat{L}_1^{(j)}\otimes \mat{e}^{(j)})(\mat{I}\otimes \mat{f}^{(j)})) (\mat{A}_1 \otimes \mat{E})$ by $\mat{X}_2$; denote $\mat{Q}(\mat{A}_1 \otimes \mat{A}_c)$ by $\mat{Y}_1$ and denote $(\mat{I}\otimes \mat{F})$ by $\mat{Y}_2$. Let $\mat{X}$ be $[\mat{P},\mat{X}_1,\mat{X}_2]$ and $\mat{Y}$ be $[\mat{Y}_1; \mat{Y}_2;\mat{Y}_2]$.

With these additional notations, we can rewrite Eq.~\eqref{eq:faxtexact:1} as
\small
\vspace{-5pt}
\begin{equation}\label{eq:faxtexact:2}
\begin{array}{l}
\mathrm{Ker}(\mat{G}(\mathcal{T}), \mat{G}(\mathcal{T}_{p \rightarrow q}))= \mat{y}' (\mat{Z} -c \mat{XY})^{-1}(\sum_{j=1}^l \mat{L}_1^{(j)} \otimes \mat{L}_2^{(j)}) \mat{x}\\
=\mat{y}'(\mat{Z}^{-1}+c \mat{Z}^{-1}\mat{X}(\mat{I}-c\mat{Y}\mat{Z}^{-1}\mat{X})^{-1}\mat{Y}\mat{Z}^{-1})\\
((\sum_{j=1}^{l} \mat{L}_1^{(j)}\otimes \mat{L}_c^{(j)})\mat{x}+(\sum_{j=1}^{l}(\mat{L}_1^{(j)}\otimes \mat{e}^{(j)})(\mat{I} \otimes \mat{f}^{(j)}))\mat{x})\\
\end{array}
\end{equation}
\normalsize
\noindent where the second equation is due to the matrix inverse lemma~\cite{golub1996matrix}. 

{\em Remarks}. In Eq.~\eqref{eq:faxtexact:2}, $\mat{Z}=\mat{I} - c(\sum_{j=1}^{l} \mat{L}_1^{(j)} \otimes \mat{L}_c^{(j)})(\mat{A}_1 \otimes \mat{A}_c)$ does not depend on the candiate $q$. Thus, if we pre-compute its inverse $\mat{Z}^{-1}$, we only need to update $\mat{X}(\mat{I}-c\mat{Y}\mat{Z}^{-1}\mat{X})^{-1}\mat{Y}$ and $\mat{P}\mat{Q}\mat{x}$ for every new candidate. Notice that compared with the original graph kernel (the first equation in Eq.~\eqref{eq:faxtexact:1}), $(\mat{I}-c\mat{Y}\mat{Z}^{-1}\mat{X})$ is a much smaller matrix of $(l+4)t \times (l+4)t$. In this way, we can accelerate the process of computing its inverse without losing the accuracy of graph kernel.

\subsection{Speedup Graph Kernel - Approx Approach}


Note that the graph kernel by Eq.~\eqref{eq:faxtexact:2} is exactly the same as the original method by the first equation in Eq.~\eqref{eq:faxtexact:1}. If we allow some approximation error, we can further speed-up the computation.

{Note that $\mat{A}_c$ is symmetric and its rank-{\em r} approximation can be written as $\hat{\mat{A}}_c = \mat{U} \mat{V}$, where $\mat{U}$ is a matrix of size $t$ by $r$ and \mat{V} is a matrix of size $r$ by $t$.  $\mat{A}_1$ can be approximated as $\hat{\mat{A}}_1=\hat{\mat{A}}_c + \mat{A}_{d1} = \mat{U} \mat{V} + \mat{E}_1\mat{F}_1 = \mat{X}_1\mat{Y}_1$, where $\mat{X}_1=[\mat{U},\mat{E}_1], \mat{Y}_1 = [\mat{V}; \mat{F}_1], \mat{E}_1=[\mat{w}_1, \mat{s}], \mat{F}_1=[\mat{s}';\mat{w}'_1]$, $\mat{s}$ is a zero vector of length $t$ except that the last element is 1, and $\mat{w}_1$ is the weight vector from $p$ to the members in $\mathcal{T}$. Similarly, after $p$ is replaced by a candidate $q$, the weight matrix of the new team can be approximated as $\hat{\mat{A}}_2 = \mat{X}_2\mat{Y}_2$ where $\mat{X}_2=[\mat{U}, \mat{E}_2], \mat{Y}_2 = [\mat{V};\mat{F}_2], \mat{E}_2 = [\mat{w}_2, \mat{s}], \mat{F}_2=[\mat{s}';\mat{w}'_2]$ and $\mat{w}_2$ is the weight vector from $q$ to the members in the new team. }The approximated graph kernel for labeled graphs can be computed as:
\small
\vspace{-5pt}
\begin{equation}\label{eq:fastapprox}
\begin{array}{l}
\hat{\mathrm{Ker}}(\mat{G}(\mathcal{T}), \mat{G}(\mathcal{T}_{p \rightarrow q}) ) =\mat{y}^T (\mat{I} - c \mat{L}_{\times} (\hat{\mat{A}}_1' \otimes \hat{\mat{A}}_2'))^{-1} \mat{L}_{\times} \mat{x}\\
=\mat{y}'(\mat{I} - c \mat{L}_{\times} (\mat{X_1Y_1}) \otimes (\mat{X_2Y_2})) ^{-1} \mat{L}_{\times} \mat{x}\\
=\mat{y}'(\mat{I} -c \mat{L}_{\times} (\mat{X_1}\otimes \mat{X_2})(\mat{Y}_1 \otimes \mat{Y}_2))^{-1} \mat{L}_{\times}\mat{x}\\
=\mat{y}'(\mat{I} +c\mat{L}_{\times}(\mat{X_1}\otimes \mat{X_2})\mat{M} (\mat{Y}_1 \otimes \mat{Y}_2) )\mat{L}_{\times}\mat{x}\\
=\mat{y}' \mat{L}_{\times} \mat{x} + c \mat{y}'(\sum_{j=1}^{l} \mat{L}_1^{(j)} \mat{X_1} \otimes \mat{L}_2^{(j)} \mat{X_2})\mat{M}(\mat{Y}_1 \otimes \mat{Y}_2)\mat{L}_{\times}\mat{x}\\
=(\sum_{j=1}^l (\mat{y}'_1 \mat{L}_1^{(j)} \mat{x}_1)(\mat{y}'_2 \mat{L}_2^{(j)}\mat{x}_2)) +c\\
(\sum_{j=1}^l \mat{y}'_1 \mat{L}_1^{(j)}\mat{X_1} \otimes \mat{y}'_2 \mat{L}_2^{(j)}\mat{X_2} )\mat{M}(\sum_{j=1}^l \mat{Y}_1\mat{L}_1^{(j)}\mat{x}_1 \otimes \mat{Y}_2\mat{L}_2^{(j)}\mat{x}_2)\\
\end{array}
\end{equation}
\normalsize
\noindent where $\mat{M} = (\mat{I} - c(\sum_{j=1}^l \mat{Y}_1 \mat{L}_1^{(j)} \mat{X_1} \otimes \mat{Y}_2 \mat{L}_2^{(j)} \mat{X_2} ))^{-1}$, the second equation is due to the kronecker product property; the third equation is again due to the matrix inverse lemma, the fourth equation is by matrix multiplication distributivity and the last equation is due to the kronecker product property.

{\em Remarks}. The computation of $\mat M$ is much cheaper than the original graph kernel since it is a matrix inverse of size $(r+2)^2\times (r+2)^2$. It was first proposed in~\cite{kang:fastwalk} to explore the low-rank structure of the input graphs to speed-up graph kernel computations. However, in the context of \teamrep, we would need to estimate the low-rank approximation {\em many} times ($O(\sum_{i\in \mathcal{T}/p}d_i)$) when we directly apply the method in \cite{kang:fastwalk}. In contrast,  we only need to compute top-$r$ approximation {\em once} by Eq.~\eqref{eq:fastapprox}. As our complexity analysis (subsection~\ref{sec:alg}) and experimental evaluations (subsection~\ref{sec:exp_efficiency}) show, this brings a few times additional speed-up. \hh{fill in xxx later}

\subsection{Putting Everything Together}\label{sec:alg}

Putting everthing together, we are ready to present our algorithms for \teamrep. Depending on the specific methods for computing the individual graph kernels, we propose two variants.

\subsubsection{Variant \#1: \teamrepalgexact}

We first present our algorithm using the exact graph kernel computation in Eq.~\eqref{eq:faxtexact:2}. The algorithm (\teamrepalgexact) is summarized in Algorithm~\ref{alg:fastexact}. We only need to pre-compute and store $\mat{Z}^{-1}$, $\mat{R}$, $\mat{b}$ and $\mat{l}$ for later use to compute each candidate's score (step 2 and 3). In the loop, the key step is to update $\mat{M}$ involving matrix inverse of size $(l+4)t \times (l+4)t$ which is relatively cheaper to compute (step 17). \hh{the description is good. also add the refernece to step numbers}

\vspace{-8pt}
\begin{algorithm}[!htb]
\caption{\teamrepalgexact}\label{alg:fastexact}

\KwIn{(1) The entire social network $\mat{G}:=\{\mat{A}, \mat{L}\}$, (2)
 original team members $\mathcal{T}$, (3)  person $p$ who  will leave the team, (4)
 starting and ending probability $\mat{x}$ and $\mat{y}$(be uniform by default), and (5) an integer $k$ (the budget)}
\KwOut{Top $k$ candidates to replace person $p$}
\BlankLine
Initialize $\mat{A}_c, \mat{L}_1^{(j)}, \mat{L}_2^{(j)}, j=1,\ldots, l$ \;
Pre-compute  \small $\mat{Z}^{-1} \leftarrow (\mat{I} - c(\sum_{j=1}^{l} \mat{L}_1^{(j)} \otimes \mat{L}_c^{(j)})(\mat{A}_1 \otimes \mat{A}_c) )^{-1}$\;
\normalsize Set \small $\mat{R} \leftarrow (\sum_{j=1}^{l} \mat{L}_1^{(j)}\otimes \mat{L}_c^{(j)})\mat{x}$; $\mat{b}\leftarrow \mat{y}^T \mat{Z}^{-1} \mat{R}$;
$\mat{l}\leftarrow c\mat{y}^T\mat{Z}^{-1}$\;
\normalsize
\For{each candidate $q$ in $\mat{G}$ after pruning}{
	Initialize $\mat{s}\leftarrow$ a zero vector of length $t$ except the last element is 1\;
	Initialize $\mat{w}\leftarrow$ weight vector from $q$ to the new team members\;
	Set $\mat{E}\leftarrow [\mat{w},\mat{s}];\mat{F}\leftarrow [\mat{s}';\mat{w}']$ \;
	Set $\mat{e}^{(j)} \leftarrow$ a $t$ by 1 zero vector except the last element is 1, for $j=1,\ldots,d_n$ \;
	Set $\mat{f}^{(j)} \leftarrow$ a $1 \times t$ zero vector except the last element which is label $j$ assignment for $q$\;
	Set $\mat{P} \leftarrow [\mat{L}_1^{(1)} \otimes \mat{e}^{(1)},\ldots, \mat{L}_1^{(l)} \otimes \mat{e}^{(l)}]$\;
	Set $\mat{Q} \leftarrow [\mat{I} \otimes \mat{f}^{(1)};\ldots;\mat{I} \otimes \mat{f}^{(l)}]$\;
	Compute $\mat{X}_1 \leftarrow (\sum_{j=1}^{l} \mat{L}_1^{(j)}\mat{A}_1 \otimes \mat{L}_c^{(j)}\mat{E})$\;
	Compute $\mat{X}_2 \leftarrow (\sum_{j=1}^{l} \mat{L}_1^{(j)}\mat{A}_1 \otimes \mat{e}^{(j)}\mat{f}^{(j)}\mat{E})$\;
	Compute $\mat{Y}_1 \leftarrow \mat{Q}(\mat{A}_1 \otimes \mat{A}_c)$\;
	Compute $\mat{Y}_2 \leftarrow (\mat{I}\otimes \mat{F})$\;
	Set $\mat{X} \leftarrow [\mat{P},\mat{X}_1,\mat{X}_2], \mat{Y} \leftarrow [\mat{Y}_1; \mat{Y}_2;\mat{Y}_2]$\;
	Update $\mat{M} \leftarrow (\mat{I}-c\mat{Y}\mat{Z}^{-1}\mat{X})^{-1}$\;
	Compute $\mat{r}' \leftarrow  \mat{Z}^{-1} \mat{P}\mat{Q} \mat{x}$\;
	Compute  $\textrm{score}(q)=\mat{b} + \mat{y}^T \mat{r}' + \mat{l}\mat{XMY}(\mat{Z}^{-1}\mat{R}+\mat{r}')$ \;
}
{\bf Return} the top $k$ candidates with the highest scores.
\end{algorithm}
\vspace{-8pt}

The effectiveness and efficiency of \teamrepalgexact\ are summarized in Lemma~\ref{lm:exact:effec} and Lemma~\ref{lm:exact:time}, respectively. Compared with  \teamrepalgbasic, Algorithm~\ref{alg:fastexact} is much faster without losing any recommendation accuracy.

\begin{lemma}\label{lm:exact:effec}{Accuracy of \teamrepalgexact}.
Algorithm~\ref{alg:fastexact} outputs the same set of candidates as \teamrepalgbasic.
\end{lemma}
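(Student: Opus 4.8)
The plan is to argue that \teamrepalgexact\ and \teamrepalgbasic\ assign identical scores to every candidate they both examine, and that the extra pruning step in \teamrepalgexact\ can never discard a candidate that would otherwise make the top-$k$. Combining these two facts forces the two algorithms to return the same set.

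First I would establish the \emph{algebraic exactness} of the score. \teamrepalgbasic\ evaluates $\mathrm{Ker}(\mat{G}(\mathcal{T}),\mat{G}(\mathcal{T}_{p\rightarrow q}))$ directly through the first line of Eq.~\eqref{eq:faxtexact:1} (equivalently Eq.~\eqref{eq:original}), whereas \teamrepalgexact\ computes $\textrm{score}(q)$ through Eq.~\eqref{eq:faxtexact:2}; so it suffices to show the two expressions coincide. I would do this by verifying that every transformation in the chain from Eq.~\eqref{eq:faxtexact:1} to Eq.~\eqref{eq:faxtexact:2} is an exact identity, with no truncation or low-rank approximation (this is precisely what distinguishes the exact variant from the approximate one). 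Concretely, the substitutions $\mat{A}_2 = \mat{A}_c + \mat{E}\mat{F}$ (a rank-$\le 2$ update) and $\mat{L}_2^{(j)} = \mat{L}_c^{(j)} + \mat{e}^{(j)}\mat{f}^{(j)}$ (a rank-$\le 1$ update) are exact; expanding $(\sum_j \mat{L}_1^{(j)}\otimes\mat{L}_2^{(j)})(\mat{A}_1'\otimes\mat{A}_2')$ by bilinearity of $\otimes$ and the mixed-product property $(\mat{A}\otimes\mat{B})(\mat{C}\otimes\mat{D})=(\mat{A}\mat{C})\otimes(\mat{B}\mat{D})$ reproduces exactly the four-term grouping into $\mat{Z}-c\mat{X}\mat{Y}$ shown in Eq.~\eqref{eq:faxtexact:1}. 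I would then check that the block concatenations $\mat{X}=[\mat{P},\mat{X}_1,\mat{X}_2]$ and $\mat{Y}=[\mat{Y}_1;\mat{Y}_2;\mat{Y}_2]$ reassemble the cross term as $\mat{X}\mat{Y}=\mat{P}\mat{Y}_1+\mat{X}_1\mat{Y}_2+\mat{X}_2\mat{Y}_2$, matching the three low-rank terms, and that the final passage to Eq.~\eqref{eq:faxtexact:2} is a verbatim application of the matrix inverse lemma~\cite{golub1996matrix}, valid whenever the relevant inverses exist (guaranteed for $c$ in the convergence regime of the Neumann series underlying Eq.~\eqref{eq:original}). Since each step is an equality, $\textrm{score}(q)=\mathrm{Ker}(\mat{G}(\mathcal{T}),\mat{G}(\mathcal{T}_{p\rightarrow q}))$ for every surviving $q$, and the concluding lines of the loop (the updates of $\mat{M}$, $\mat{r}'$, and $\textrm{score}(q)$) merely evaluate the right-hand side of Eq.~\eqref{eq:faxtexact:2} with the precomputed $\mat{Z}^{-1},\mat{R},\mat{b},\mat{l}$.

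Next I would dispatch the pruning step. \teamrepalgbasic\ scores all $j\notin\mathcal{T}$, while \teamrepalgexact\ scores only those connected to $\mathcal{T}/p$. By Lemma~\ref{lm:pruning}, every discarded candidate has a kernel value no larger than that of any retained candidate, so no discarded candidate can outrank a retained one. Hence, whenever at least $k$ candidates survive pruning, the top-$k$ drawn from the retained pool is exactly the global top-$k$; I would flag the mild caveat that boundary ties (and the degenerate case of fewer than $k$ connected candidates) must be resolved consistently between the two algorithms. Putting the two parts together, identical scores on the retained pool plus safe pruning yields that Algorithm~\ref{alg:fastexact} returns the same set as \teamrepalgbasic.

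The main obstacle I anticipate is the first part: carefully tracking the Kronecker-product bookkeeping so that the block concatenations in the algorithm provably reconstruct the exact cross term $\mat{X}\mat{Y}$, and confirming that the matrix inverse lemma is applied to a genuinely invertible matrix rather than a merely formal identity. The pruning argument is essentially immediate from Lemma~\ref{lm:pruning}; the algebra is where the real verification work lies, though it is routine rather than deep.
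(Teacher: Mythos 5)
Your proposal is correct and follows essentially the same two-step route as the paper's own (sketched) proof: safety of pruning via Lemma~\ref{lm:pruning}, plus exactness of the score because every transformation from Eq.~\eqref{eq:faxtexact:1} to Eq.~\eqref{eq:faxtexact:2} is an identity (rank-2/rank-1 updates, the Kronecker mixed-product property, and the matrix inverse lemma), so the surviving candidates receive the same kernel values and hence the same ranking. Your added caveats about tie-breaking and the degenerate case of fewer than $k$ connected candidates are reasonable refinements that the paper does not address.
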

\begin{proof}
(Sketch) First, according to Lemma~\ref{lm:pruning}, we will not miss a promising candidate during the pruning stage. Second, for each candidate after pruning, Algorithm~\ref{alg:fastexact} calculates its graph kernel exactly the same as Eq.~\eqref{eq:faxtexact:2}, which is in turn the same as Eq.~\eqref{eq:faxtexact:1} and hence Eq.~\eqref{eq:original}. Therefore, after ranking the scores, Algorithm~\ref{alg:fastexact} outputs the same set of candidates as \teamrepalgbasic.
\end{proof}

\begin{lemma}\label{lm:exact:time}{Time Complexity of \teamrepalgexact}
Algorithm~\ref{alg:fastexact} takes $O((\sum_{i\in \mathcal{T}/p}d_i)(lt^5 + l^3t^3))$ in time.
\end{lemma}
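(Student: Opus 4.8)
The plan is to bound the total running time as a one-time pre-computation cost (steps~1--3, outside the loop) plus the cost of the main loop (steps~4--19), where the loop cost equals the number of surviving candidates times the per-iteration cost. By Lemma~\ref{lm:pruning} and its remark, pruning leaves only $O(\sum_{i\in\mathcal{T}/p}d_i)$ candidates, so the loop runs that many times. Hence it suffices to show that (i) each iteration costs $O(lt^5+l^3t^3)$, and (ii) the pre-computation is subsumed by the loop so it does not inflate the asymptotic bound.

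First I would record the dimensions of every object in Eq.~\eqref{eq:faxtexact:2}. The team graphs have $t$ nodes, so $\mat{A}_1,\mat{A}_c$ are $t\times t$, every Kronecker product such as $\mat{A}_1\otimes\mat{A}_c$ is $t^2\times t^2$, and the pre-computed $\mat{Z}^{-1}$ is $t^2\times t^2$. Because $\mat{E},\mat{F}$ are rank-$2$ factors and each $\mat{e}^{(j)},\mat{f}^{(j)}$ is a single column/row, stacking over the $l$ skills makes $\mat{X}=[\mat{P},\mat{X}_1,\mat{X}_2]$ a $t^2\times(l+4)t$ matrix and $\mat{Y}=[\mat{Y}_1;\mat{Y}_2;\mat{Y}_2]$ a $(l+4)t\times t^2$ matrix; consequently $\mat{M}=(\mat{I}-c\mat{Y}\mat{Z}^{-1}\mat{X})^{-1}$ is $(l+4)t\times(l+4)t$, exactly as noted in the remark following Eq.~\eqref{eq:faxtexact:2}.

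With the dimensions fixed, the per-iteration cost is governed by two operations. The first is forming $\mat{Z}^{-1}\mat{X}$, a product of a $t^2\times t^2$ matrix with a $t^2\times(l+4)t$ matrix, costing $O(t^4\cdot lt)=O(lt^5)$ (the direct construction of $\mat{Y}_1=\mat{Q}(\mat{A}_1\otimes\mat{A}_c)$ is of the same order). The second is inverting the dense $(l+4)t\times(l+4)t$ matrix $\mat{M}$, at cost $O((lt)^3)=O(l^3t^3)$. Every remaining step I would verify is dominated by these: the intermediate product $\mat{Y}(\mat{Z}^{-1}\mat{X})$ costs $O(l^2t^4)$, and since $l^2t^4=\sqrt{(lt^5)(l^3t^3)}\le lt^5+l^3t^3$ it is absorbed; the scalar score in step~19 is evaluated as a chain of vector--matrix products (contracting $\mat{l}$ and the $\mat{x}$-side vectors first), keeping it at $O(lt^3+l^2t^2)$. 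Summing gives $O(lt^5+l^3t^3)$ per iteration, and multiplying by the candidate count yields the claimed $O((\sum_{i\in\mathcal{T}/p}d_i)(lt^5+l^3t^3))$.

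The hard part will be the order of the matrix multiplications, since the same expression can cost wildly different amounts depending on associativity: one must never materialize $\mat{X}\mat{M}\mat{Y}$ as a full $t^2\times t^2$ matrix, and one must exploit that each $\mat{L}_1^{(j)},\mat{L}_c^{(j)}$ is diagonal and that $\mat{E},\mat{F},\mat{e}^{(j)},\mat{f}^{(j)}$ are low rank so that the Kronecker sums defining $\mat{X}_1,\mat{X}_2$ are assembled in $O(lt^3)$ rather than naively. The secondary point is establishing (ii): the pre-computation is dominated by inverting $\mat{Z}$ at cost $O(t^6)$, which is subsumed by the loop total once the candidate set is non-trivial (concretely, whenever $\sum_{i\in\mathcal{T}/p}d_i=\Omega(t/l)$, so that the summed $lt^5$ work exceeds $t^6$); this is the regime of interest, and I would state the condition explicitly so that the one-time cost does not leak into the final bound.
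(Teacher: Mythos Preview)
Your proposal is correct and follows essentially the same approach as the paper's sketch: bound the number of loop iterations by $O(\sum_{i\in\mathcal{T}/p}d_i)$ via pruning, then show each iteration costs $O(lt^5+l^3t^3)$. The paper attributes the $O(lt^5)$ term to constructing $\mat{X}_1,\mat{X}_2,\mat{Y}_1$ (with $\mat{Y}_1=\mat{Q}(\mat{A}_1\otimes\mat{A}_c)$ being the bottleneck) and the $O(lt^5+l^3t^3)$ term to forming and inverting $\mat{M}$, whereas you single out $\mat{Z}^{-1}\mat{X}$ and the inversion of $\mat{M}$; both accountings are valid since all these operations are $O(lt^5)$ or smaller. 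Your treatment is in fact more careful than the paper's sketch: you explicitly absorb the $O(l^2t^4)$ cross term via AM--GM, you flag the associativity pitfalls in step~19, and you call out the one-time $O(t^6)$ cost of $\mat{Z}^{-1}$, which the paper's proof omits entirely.
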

\begin{proof}
 (Sketch) After pruning, the number of potential candidates (the number of loops in Algorithm~\ref{alg:fastexact} ) is $O(\sum_{i \in \mathcal{T}/p} d_i)$. In every loop, computing $\mat{X}_1$,$\mat{X}_2$ and $\mat{Y}_1$ take $O(lt^5)$; computing $\mat{M}$ takes $O(lt^5+l^3t^3)$ and computing the score($q$) takes $O(lt^3)$. Putting everything together, the time complexity of Algorithm~\ref{alg:fastexact} is $O((\sum_{i\in \mathcal{T}/p}d_i)(lt^5 + l^3t^3))$.
\end{proof}

\subsubsection{Variant \#2: \teamrepalgapp}

By using Eq.~\eqref{eq:fastapprox} to compute the graph kernel instead, we propose an even faster algorithm (\teamrepalgapp), which is summarized in Algorithm~\ref{alg:fastappro}. In the algorithm, we only need to compute the top $r$ eigen-decomposition for $\mat{A}_c$ once (step 2), and use that to update the low rank approximation for every new team. Besides, when we update $\mat{M}$, a matrix inverse of size $(r+2)^2 \times (r+2)^2$ (step 14), the time is independent of the team size.
\vspace{-8pt}
\begin{algorithm}[!htb]
\caption{\teamrepalgapp}\label{alg:fastappro}

\KwIn{(1) The entire social network $\mat{G}:=\{\mat{A}, \mat{L}\}$, (2)
 original team members $\mathcal{T}$, (3)  person $p$ who  will leave the team, (4)
 starting and ending probability $\mat{x}$ and $\mat{y}$.(be uniform by default), and (5) an integer $k$ (the budget)}
\KwOut{Top $k$ candidates to replace person $p$}
\BlankLine
Initialize $\mat{A}_c, \mat{L}_1^{(j)}, \mat{L}_2^{(j)}, j=1,\ldots, l$ \;
Compute top $r$ eigen-decomposition for $\mat{A}_c$: $\mat{U} \mat{\Lambda} \mat{U}' \leftarrow \mat{A}_c$ \;
Set $\mat{V} \leftarrow \mat{\Lambda} \mat{U}'$\;
Initialize $\mat{s} \leftarrow$ a zero vector of length $t$ except the last element is 1\;
Initialize $\mat{w}_1 \leftarrow $ weight vector from $p$ to $\mathcal{T}$\;
Set $\mat{E}_1\leftarrow [\mat{w}_1,\mat{s}], \mat{F}_1\leftarrow [\mat{s}';\mat{w}'_1]$ \;
Set $\mat{X}_1 \leftarrow [\mat{U},\mat{E}_1]$ , $\mat{Y}_1 \leftarrow [\mat{V}; \mat{F}_1]$\;
\For{each candidate $q$ in $\mat{G}$ after pruning}{
	Initialize $\mat{w}_2 \leftarrow$ weight vector from $q$ to the new team members \;
	Set $\mat{E}_2\leftarrow [\mat{w}_2,\mat{s}], \mat{F}_2\leftarrow [\mat{s}';\mat{w}'_2]$ \;
	Set $\mat{X}_2 \leftarrow [\mat{U},\mat{E}_2]$ , $\mat{Y}_2 \leftarrow [\mat{V}; \mat{F}_2]$\;
	Compute $\mat{S} \leftarrow \sum_{j=1}^{l} \mat{y}'_1 \mat{L}_1^{(j)} \mat{X}_1 \otimes \mat{y}'_2 \mat{L}_2^{(j)} \mat{X}_2$\;
	Compute \small $\mat{T} \leftarrow \sum_{j=1}^{l} \mat{Y}_1\mat{L}_1^{(j)} \mat{x}_1 \otimes \mat{Y}_2 \mat{L}_2^{(j)} \mat{x}_2)$\normalsize\;
	Update \small $\mat{M} \leftarrow (\mat{I} - c(\sum_{j=1}^l \mat{Y}_1 \mat{L}_1^{j} \mat{X}_1 \otimes \mat{Y}_2 \mat{L}_2^{j} \mat{X}_2 ))^{-1}$\normalsize\;
	Set  \small $\textrm{score}(q)=(\sum_{j=1}^l (\mat{y}'_1 \mat{L}_1^{(j)} \mat{x}_1)(\mat{y}'_2 \mat{L}_2^{(j)}\mat{x}_2))  + c\mat{SMT}$ \normalsize\;
}
{\bf Return} the top $k$ candidates with the highest scores.
\end{algorithm}
\vspace{-8pt}

The effectiveness and efficiency of \teamrepalgapp\ are summarized in Lemma~\ref{lm:approx:effec} and Lemma~\ref{lm:approx:time}, respectively. Compared with  \teamrepalgbasic\ and \teamrepalgexact, Algorithm~\ref{alg:fastappro} is even faster; and the only place it introduces the approximation error is the low-rank approximation of $\mat{A}_c$ (step 2). 

\begin{lemma}\label{lm:approx:effec}{Accuracy of \teamrepalgapp}.
If $\mat{A}_c=\mat{U}\mat{\Lambda}\mat{U}'$ holds, Algorithm~\ref{alg:fastappro} outputs the same set of candidates as \teamrepalgbasic.
\end{lemma}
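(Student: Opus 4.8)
The plan is to exploit the single source of approximation error identified in the remark preceding the lemma: the only place \teamrepalgapp\ departs from an exact computation is when it replaces $\mat{A}_c$ by its rank-$r$ surrogate $\hat{\mat{A}}_c = \mat{U}\mat{V}$. The hypothesis $\mat{A}_c = \mat{U}\mat{\Lambda}\mat{U}'$ says precisely that this surrogate is exact, so the first thing I would do is record that under this hypothesis $\hat{\mat{A}}_c = \mat{U}\mat{V} = \mat{U}\mat{\Lambda}\mat{U}' = \mat{A}_c$, using $\mat{V} = \mat{\Lambda}\mat{U}'$ from step 3 of the algorithm.

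Next I would propagate this equality through the two factorizations that the algorithm actually forms. Since $\hat{\mat{A}}_1 = \hat{\mat{A}}_c + \mat{A}_{d1}$ and $\hat{\mat{A}}_2 = \hat{\mat{A}}_c + \mat{A}_{d2}$, with the rank-$2$ correction terms carrying the exact connectivity of $p$ and of the candidate $q$ respectively, the equality $\hat{\mat{A}}_c = \mat{A}_c$ immediately gives $\hat{\mat{A}}_1 = \mat{A}_1$ and $\hat{\mat{A}}_2 = \mat{A}_2$; equivalently, the factored forms $\mat{X}_1\mat{Y}_1$ and $\mat{X}_2\mat{Y}_2$ reproduce the true adjacency matrices exactly rather than approximately.

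With the inputs matching exactly, I would then argue that every remaining equality in the derivation of Eq.~\eqref{eq:fastapprox} is an exact algebraic identity rather than an approximation: the Kronecker mixed-product rule $(\mat{X}_1\mat{Y}_1)\otimes(\mat{X}_2\mat{Y}_2) = (\mat{X}_1\otimes\mat{X}_2)(\mat{Y}_1\otimes\mat{Y}_2)$, the matrix inverse (Sherman--Morrison--Woodbury) lemma, and the distributivity of matrix multiplication all hold with equality. Hence $\hat{\mathrm{Ker}}(\mat{G}(\mathcal{T}), \mat{G}(\mathcal{T}_{p\rightarrow q})) = \mathrm{Ker}(\mat{G}(\mathcal{T}), \mat{G}(\mathcal{T}_{p\rightarrow q}))$ for every candidate $q$ that survives pruning, so the score computed in step 15 coincides with the score \teamrepalgbasic\ assigns via Eq.~\eqref{eq:original}.

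Finally, I would close the argument exactly as in the proof of Lemma~\ref{lm:exact:effec}: Lemma~\ref{lm:pruning} guarantees that pruning discards only candidates whose exact kernel is dominated by that of some retained candidate, so no member of the true top-$k$ is lost; and since every retained candidate receives its exact score, ranking them reproduces the output of \teamrepalgbasic. The only step requiring care is verifying the invertibility conditions needed for the matrix inverse lemma (so that the intermediate matrix $\mat{M}$ in step 14 is well defined), but this is inherited from the nonsingularity of $(\mat{I} - c\mat{A}_{\times})$ already presumed in Eq.~\eqref{eq:original} and does not depend on the low-rank step; I expect this bookkeeping, rather than any substantive difficulty, to be the main thing to pin down.
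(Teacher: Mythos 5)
The paper itself gives no proof of this lemma --- it is literally ``Omitted for brevity'' --- so there is nothing to compare against; your argument is correct and is evidently the intended one, mirroring the sketch the authors do provide for Lemma~\ref{lm:exact:effec}. The two load-bearing observations are exactly the ones you isolate: under the hypothesis the rank-$r$ factorization is exact, so $\hat{\mat{A}}_1=\mat{A}_1$ and $\hat{\mat{A}}_2=\mat{A}_2$ and every subsequent line of Eq.~\eqref{eq:fastapprox} is an algebraic identity (the transposes appearing in Eq.~\eqref{eq:original} are immaterial here because the adjacency matrices are symmetric), while Lemma~\ref{lm:pruning} makes the pruning stage lossless. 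Your closing caveat about the invertibility needed for the matrix inverse lemma is fair, but that condition is inherited from the nonsingularity of $\mat{I}-c\mat{A}_{\times}$ already assumed by \teamrepalgbasic, so it does not affect the equivalence claim.
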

\begin{proof} Omitted for brevity.
\end{proof}

\begin{lemma}\label{lm:approx:time}{Time Complexity of \teamrepalgapp}
Algorithm~\ref{alg:fastappro} takes $O((\sum_{i \in \mathcal{T}/p} d_i)(lt^2r +r^6))$ in time.
\end{lemma}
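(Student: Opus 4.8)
The plan is to follow the same two-stage accounting used in the proof of Lemma~\ref{lm:exact:time}: first bound the number of loop iterations using the pruning guarantee, then bound the per-iteration cost of Algorithm~\ref{alg:fastappro}, while separately verifying that the one-time precomputation performed before the loop is dominated by the total loop cost. Multiplying the iteration count by the per-iteration cost then yields the claimed bound.

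For the iteration count I would invoke the Remark following Lemma~\ref{lm:pruning}: after pruning we retain only candidates $q$ adjacent to at least one member of $\mathcal{T}/p$, and the number of such candidates is $O(\sum_{i\in\mathcal{T}/p} d_i)$. Hence the \textbf{for} loop executes $O(\sum_{i\in\mathcal{T}/p} d_i)$ times, which matches the leading factor in the statement.

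Next I would bound the cost of a single iteration by tracking the shapes of the matrices produced in steps 9--15. After the low-rank construction, $\mat{X}_1,\mat{X}_2$ are $t\times(r+2)$, $\mat{Y}_1,\mat{Y}_2$ are $(r+2)\times t$, and each $\mat{L}_1^{(j)},\mat{L}_2^{(j)}$ is diagonal. Forming each product $\mat{Y}_1\mat{L}_1^{(j)}\mat{X}_1$ (and its $\mat{X}_2$ counterpart) costs $O(r^2 t)$, since multiplication by a diagonal matrix is $O(rt)$ and the outer multiply is $O(r^2 t)$; doing this for all $l$ skill slices costs $O(lr^2 t)$, which is $O(lt^2 r)$ because $r\le t$. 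The vectors $\mat{S}$ and $\mat{T}$ in steps 12--13 are cheaper, at $O(l(tr+r^2))$. Updating $\mat{M}$ in step~14 requires assembling the $(r+2)^2\times(r+2)^2$ matrix $\mat{I}-c\sum_j(\cdots)$ from $l$ Kronecker products and inverting it, where the inversion dominates this part at $O(r^6)$; the score in step~15 is a $1\times(r+2)^2$ by $(r+2)^2\times(r+2)^2$ by $(r+2)^2\times1$ product costing $O(r^4)$. Collecting the dominant contributions, each iteration costs $O(lt^2 r + r^6)$.

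The main obstacle — really the only delicate point — is to confirm that the work done once, before the loop, does not dominate, and that the cross terms are genuinely absorbed. Step~2 computes the top-$r$ eigen-decomposition of the $t\times t$ matrix $\mat{A}_c$; with a standard iterative eigensolver this costs $O(t^2 r)$ (or $O(mr)$ exploiting sparsity), which is bounded by the cost of a single loop iteration $O(lt^2 r + r^6)$, and the remaining setup in steps 3--7 is at most $O(t^2)$, hence also dominated. One must additionally check that terms such as the $O(lr^4)$ from forming the $l$ Kronecker products are subsumed by $O(lt^2 r + r^6)$ in the intended low-rank regime $r\le t$; this bookkeeping, rather than any conceptual difficulty, is where I would concentrate. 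Multiplying the $O(\sum_{i\in\mathcal{T}/p} d_i)$ iterations by the $O(lt^2 r + r^6)$ per-iteration cost then gives the stated complexity $O((\sum_{i\in\mathcal{T}/p} d_i)(lt^2 r + r^6))$.
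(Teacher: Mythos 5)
Your proposal is correct and follows exactly the template the paper itself uses for the analogous result (Lemma~\ref{lm:exact:time}): the pruning remark bounds the loop count by $O(\sum_{i\in\mathcal{T}/p}d_i)$, and the per-iteration shape-tracking gives $O(lt^2r+r^6)$, with the $(r+2)^2\times(r+2)^2$ inverse supplying the $r^6$ term; the paper's own proof of this lemma is omitted, so your writeup is if anything more complete, and your flagged concern about the $O(lr^4)$ Kronecker-assembly term is a legitimate (minor) looseness in the paper's stated bound rather than a flaw in your argument.
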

\begin{proof} Omitted for brevity.
\end{proof}

\vspace{-10pt}
\section{Experimental Evaluations}

%

In this section, we present the experimental evaluations. The experiments are designed to answer the following questions:
\vspace{-5pt}
\begin{itemize}
\setlength{\itemsep}{0.1\baselineskip}
\item {\em Effectiveness}: How accurate are the proposed algorithms for \teamrep?
\item {\em Efficiency:} How scalable are the proposed algorithms?
\end{itemize}

\subsection{Datasets}
{\em DBLP}. DBLP dataset\footnote{http://arnetminer.org/citation} provides bibliographic information on major computer science journals and proceedings. We use it to build a co-authorship network where each node is an author and the weight of each edge stands for the number of papers the two corresponding authors have co-authored. The network constructed has $n=916,978$ nodes and $m=3,063,244$ edges. We use the conferences (\emph{e.g.,} KDD, SIGMOD, CVPR, etc) as the skill of the authors. \hh{liangyue: if we only have 41 conferences, how can they cover all 900K authors? one way to get around this is not to mention l in table 2} For a given paper, we treat all of its co-authors as a team. Alternatively, if a set of authors co-organize an event (such as a conference), we also treat them as a team.

{\em Movie}. This dataset\footnote{http://grouplens.org/datasets/hetrec-2011/} is an extension of MovieLens dataset, which links movies from MovieLens with their corresponding IMDb webpage and Rotten Tomatoes review system. It contains information of 10,197 movies, 95,321 actors/actress and 20 movie genres (\emph{e.g.}, action, comedy, horror, etc.). Each movie has on average 22.8 actors/actress and 2.0 genres assignments. We set up the social network of the actors/actresses where each node represents one actor/actress and the weight of each edge is the number of movies the two linking actors/actresses have co-stared. We use the movie genres that a person has played as his/her skills. For a given movie, we treat all of its actors/actress as a team.

{\em NBA}. The NBA dataset\footnote{http://www.databasebasketball.com} contains NBA and ABA statistics from the year of 1946 to the year of 2009. It has information of 3,924 players and 100 teams season by season. We use players' positions as their skill labels, including {\em guard}, {\em forward} and {\em center}. The edge weight of the player network stands for the number of seasons that the two corresponding nodes/individuals played in the same team.

The statistics of these three datasets are summarized in Table~\ref{tab:data}. All the experiments are run on a Windows machine with 16 GB memory and Intel i7-2760QM CPU.

\begin{table}[t]
\caption{Summary of Datasets.}
\vspace{-8pt}
\label{tab:data}\centering
\begin{tabular}{c||c|c|c}
  \hline
  Data & n & m &  \# of teams\\ \hline
  {\em DBLP} & 916,978 & 3,063,244 & 1,572,278\\
  {\em Movie} & 95,321& 3,661,679 &  10,197\\
  {\em NBA} & 3,924 & 126,994 &  1,398\\
  \hline
\end{tabular}
\vspace{-15pt}
\end{table}

{\em Repeatability of Experimental Results.} All the three datasets are publicly available. We will release the code of the proposed algorithms through authors' website.

\subsection{Effectiveness Results}

{\em A. Qualitative Evaluations}. We first present some case studies on the three datasets to gain some intuitions.

{\em Case studies on DBLP}. Let us take a close look at Fig.~\ref{fig:example}, which shows a screenshot of our current demo system. The original team is shown on the left side and the person leaving the team ({\em Philip S. Yu}) is represented by a node (diagram) with larger radius. If the user clicks a replacement from the recommendation list (on the top), the system will show the new team on the right side. Here, We introduced a novel visualization technique to represent  authors' relationships and their expertise within one unique graph visualization. Particularly, in this visualization, the authors are shown as voronoi diagrams~\cite{du1999centroidal}. The authors' expertise is visualized as the voronoi cells inside the diagram, that is, each cell indicates a type of expertise.  We use different color hues to identify different expertise types and use the color saturations to encode the author's strength in that expertise type. For example, if {\em KDD} is represented in orange, a bright orange cell in a voronoi diagram means the author has a strong expertise in {\em KDD}. In contrast, a white cell indicates the author's lacking of the corresponding expertise. To facilitate visual comparison of different authors, we fix the position of these expertise cells across different diagrams so that, for example, {\em KDD} is always shown at the left side of the author diagrams.  These voronoi diagrams are connected by links indicating the authors' relationships. The strength of the relationship are presented by the line thickness.

Fig.~\ref{fig:example} visualizes the team structures before and  after {\em Philip S. Yu} becomes unavailable in the team writing~\cite{DBLP:journals/tkde/PeiWLWWY06}. Our algorithm's top recommendation is {\em Jiawei Han}. As we can see, both {\em Han} and {\em Yu} posses very similar skills and are renowned for their extraordinary contributions in the data mining and databases community. Moreover, {\em Han} has collaborated with almost each of the rest authors/team members. Looking more closely, we find {\em Han} preserves several key triangle sub-structures in the original team: one with {\em Ke Wang} and {\em Jian Pei}, and the other with {\em Haixun Wang} and {\em Jian Pei}. These triangle sub-structures might play a critical role in accomplishing sub-tasks in writing the paper.

We also consider a bigger team, i.e, the organizing committee of {\em KDD 2013}. After filtering those not in {\em DBLP}, we have 32 people in the committee team. We use their co-authorship network as their social network. Suppose one of the research track co-chairs {\em Inderjit Dhillon} becomes unavailable and we are searching for another researcher who can fill in this critical role in organizing {\em KDD 2013}.  The top five candidates our algorithm recommends are {\em Philip S. Yu}, {\em Jiawei Han}, {\em Christos Faloutsos}, {\em Bing Liu} and {\em Wei Wang}. The results are consistent with the intuitions - all of these recommended researchers are highly qualified - not only have they made remarkable contributions to the data mining field, but also they have strong ties with the remaining organizers of {\em KDD 2013}. For example, {\em Liu} is the current chair of KDD executive committee; {\em Wang} is one of the research track program chairs for {\em KDD 2014}; and {\em Faloutsos} was the PC co-chair of {\em KDD 2003}, etc.

\begin{figure}[t]
\centering
\includegraphics[width=0.45\textwidth, height=45mm]{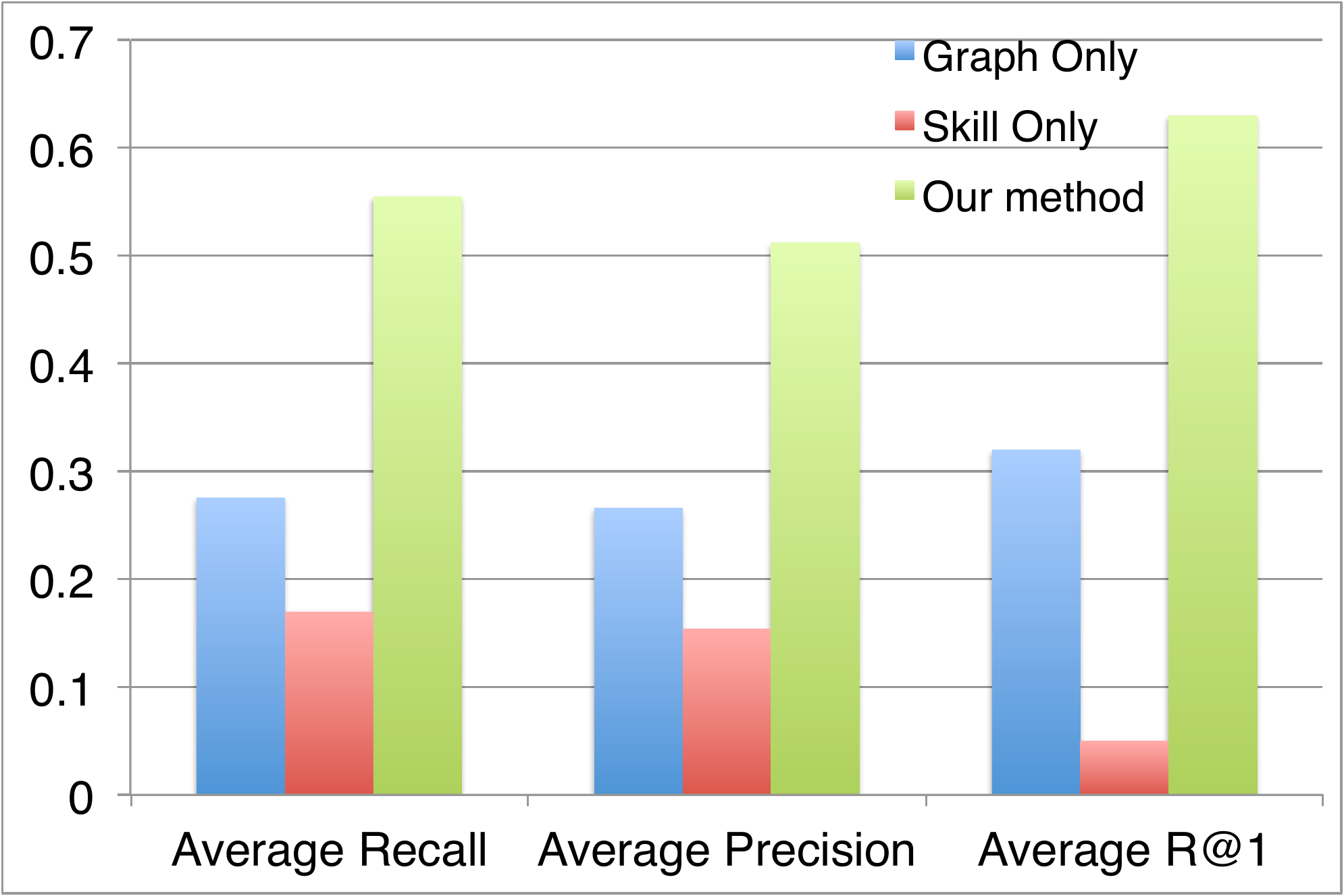}
\vspace{-10pt}
\caption{The average recall, average precision and R$@$1 of the three comparison methods. Higher is better.}
\vspace{-20pt}
\label{fig:precision_recall}
\end{figure}

{\em Case studies on Movie}.
Assuming actor {\em Matt Damon} became unavailable when filming the epic war movie \emph{Saving Private Ryan} (1998) and we need to find an alternative actor who can play \emph{Ryan}'s role in the movie. The top five recommendations our algorithm gives are: {\em Samuel L. Jackson}, {\em Steve Buscemi}, {\em Robert De Niro}, {\em Christopher Walken}, {\em Bruce Willis}. As we know, \emph{Saving Private Ryan} is a movie of \emph{ action} and \emph{drama} genres. Notice that both {\em Damon} and {\em Jackson} have participated in many movies of \emph{drama}, \emph{thriller} and \emph{action} genres, hence {\em Jackson} has the acting skills required to play the role in this movie. Moreover, {\em Jackson} has co-played with {\em Tom Sizemore}, {\em Vin Diesel}, {\em Dale Dye}, {\em Dennis Farina}, {\em Giovanni Ribisi} and {\em Ryan Hurst} in the crew before. The familiarity might increase the harmony of filming the movie with others.

{\em Case studies on NBA}.
Let us assume that {\em Kobe Bryant} in Los Angeles Lakers was hurt during the regular season in 1996 and a bench player is badly wanted. The top five replacements our algorithm recommends are: {\em Rick Fox}, {\em A.c. Green}, {\em Jason Kidd}, {\em Brian Shaw} and {\em Tyronn Lue}. As we know, {\em Bryant} is a guard in NBA. Among the five recommendations, {\em Kidd}, {\em Shaw} and {\em Lue} all play as guards. More importantly, {\em Jason}, {\em Brian} and {\em Tyronn} have played with 9, 7 and 9 of the rest team members on the same team in the same season for multiple times. Therefore, it might be easier for them to maintain the moment and chemistry of the team which is critical to winning the game.

{\em B. Quantitative Evaluations}. Besides the above case studies, we also perform quantitative evaluations to compare the proposed algorithms with the alternative choices, including (a) only with {\em structure matching} and not including $\mat{L}_{\times}$ in Eq.~\eqref{eq:original} (Graph Only), and (b) only with {\em skill matching}  and using cosine similarity of skill vectors as scores (Skill Only).

{\em User studies}. We perform a user study as follows. we choose 10 papers from various fields, replace one author of each paper and run the three comparison methods, and each of them recommends top five candidates. Then, we mix the outputs (15 recommendations in total) and ask users to (a) mark exactly one best replacement; (b) mark all good replacements from the list of 15 recommended candidates. The results are presented in Fig.~\ref{fig:precision_recall}, Fig.~\ref{fig:userstudy_recall} and Fig.~\ref{fig:userstudy_precision}, respectively. As we can see from these figures, the proposed method (the green bar) is best in terms of both precision and recall. For example, the average recalls by our method, by `Graph Only' and by `Skill Only' are 55\%, 28\%, 17\%, respectively. As for different papers, our method wins 9 out-of 10 cases (except for `paper 2' where `Skill Only' is best). 

\begin{figure*}[t]
\centering
\begin{minipage}[b]{0.49\textwidth}
	\includegraphics[width=\textwidth, height=50mm]{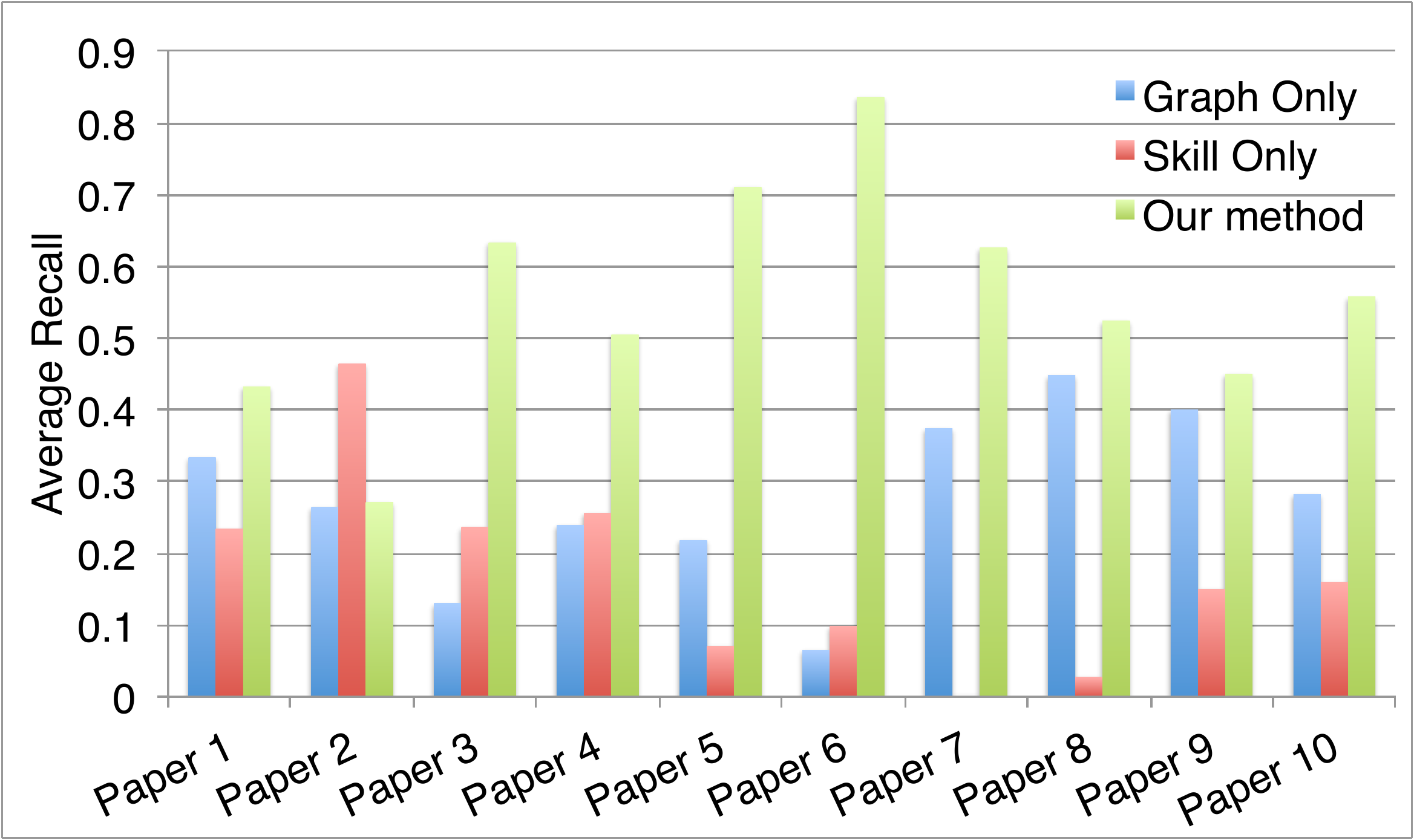}
	\vspace{-10pt}
	\caption{Recall for different papers. Higher is better.}
	\vspace{-10pt}
	\label{fig:userstudy_recall}
\end{minipage}~
\begin{minipage}[b]{0.49\textwidth}
	\includegraphics[width=\textwidth, height=50mm]{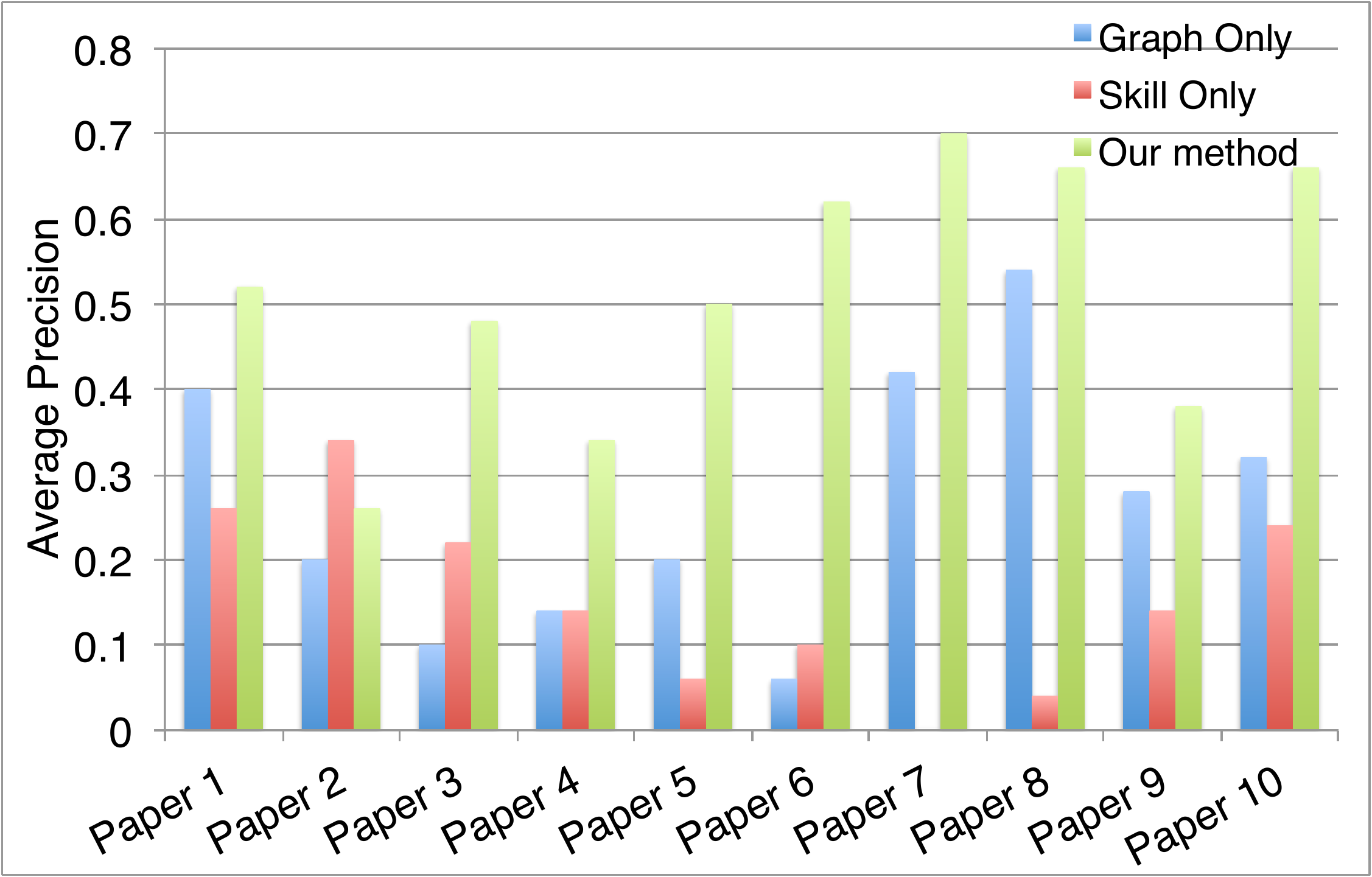}
	\vspace{-10pt}
	\caption{Precision for different papers. Higher is better.}
	\vspace{-10pt}
	\label{fig:userstudy_precision}
\end{minipage}
\end{figure*}

{\em Author alias prediction}. In {\em DBLP}, some researchers might have  multiple name identities/alias. For example, in some papers, {\em Alexander J. Smola} might be listed as {\em Alex J. Smola}, {\em Zhongfei (Mark) Zhang} might be listed as {\em Zhongfei Zhang}, etc. For such an author, we run the team replacement algorithm on those papers s/he was involved to find top-{\em k} replacement. If his/her other alias appears in the top-{\em k} recommended list, we treat it as a {\em hit}. The average accuracy of different methods is shown in Fig.~\ref{fig:accuracy}. Again, our method performs best.

\hide{
Since there is no ground-truth as to which candidates are better, it's hard for us to measure the effectiveness quantitatively. However we can make use of name ambiguity in DBLP, \emph{i.e.}, the same author may have multiple name variations, and replace one name variation in a paper and regard another variation as the ground truth. We choose 6 such authors in DBLP: Chengxiang Zhai vs. ChengXiang Zhai,  HongJiang Zhang vs. Hong-Jiang Zhang, Zhongfei Zhang vs. Zhongfei (Mark) Zhang, Alexander J. Smola vs. Alex J. Smola, David Wai-Lok Cheung vs. David W. Cheung, Tieniu Tan vs. T. N. Tan. We test on 70 papers with these names, replace one name variation and the goal is to expect another one recommended. For each comparison method, if the ground truth appears in the top $k$ recommendation list of the method, we regard it as a {\em hit}. Varying the value of budget $k$, we can compare the average accuracy of different methods shown in Fig.~\ref{fig:accuracy}. Our method achieves the highest accuracy and Label Only's accuracy is always 0 in this case.
}

\begin{figure*}[t]
\centering
\begin{minipage}[b]{0.49\textwidth}
\includegraphics[width=\textwidth, height=50mm]{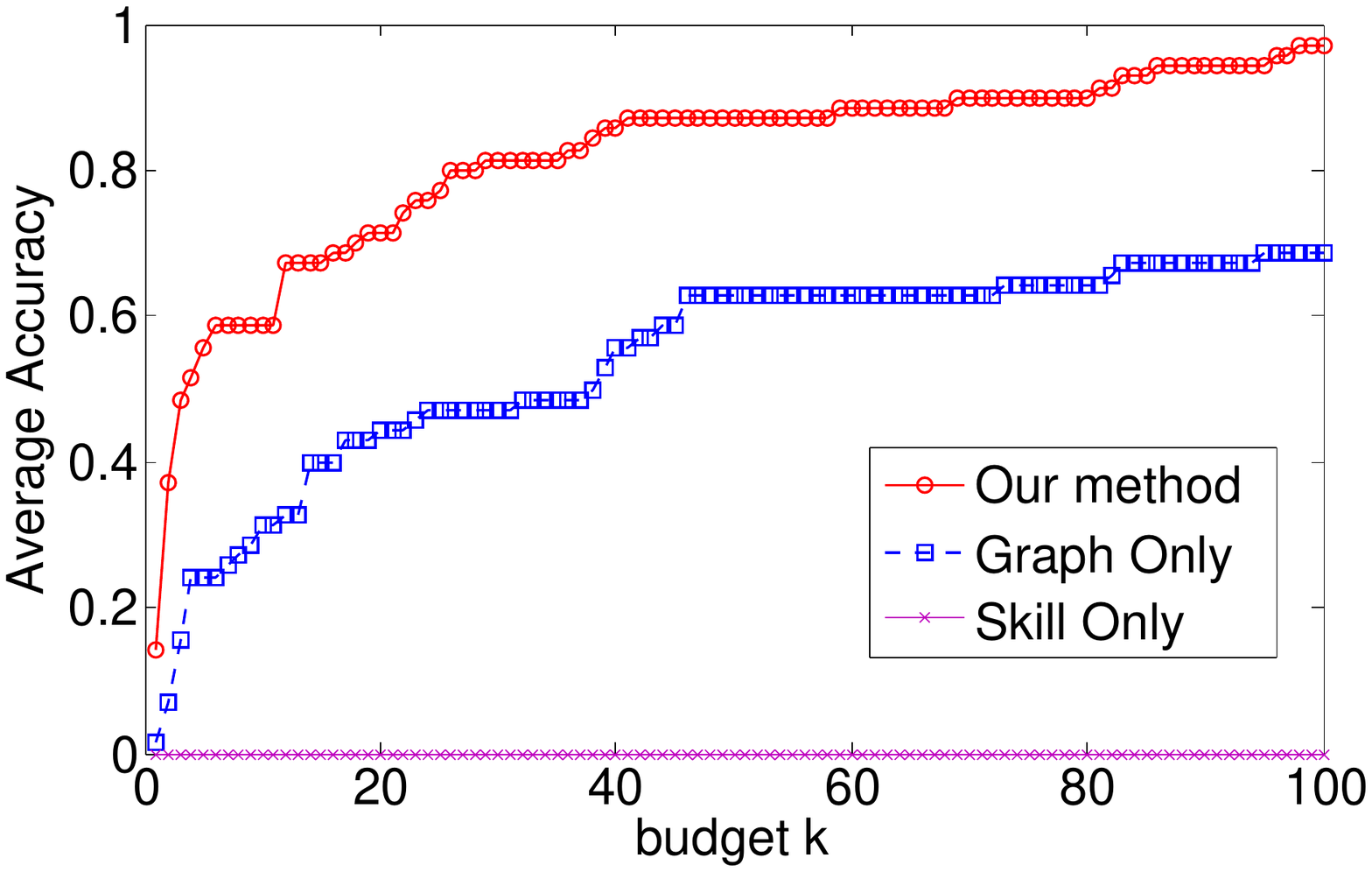}
\vspace{-10pt}
\caption{Average accuracy vs. budget $k$. Higher is better.}
\vspace{-10pt}
\label{fig:accuracy}
\end{minipage}~
\begin{minipage}[b]{0.49\textwidth}
\includegraphics[width=\textwidth, height=47mm]{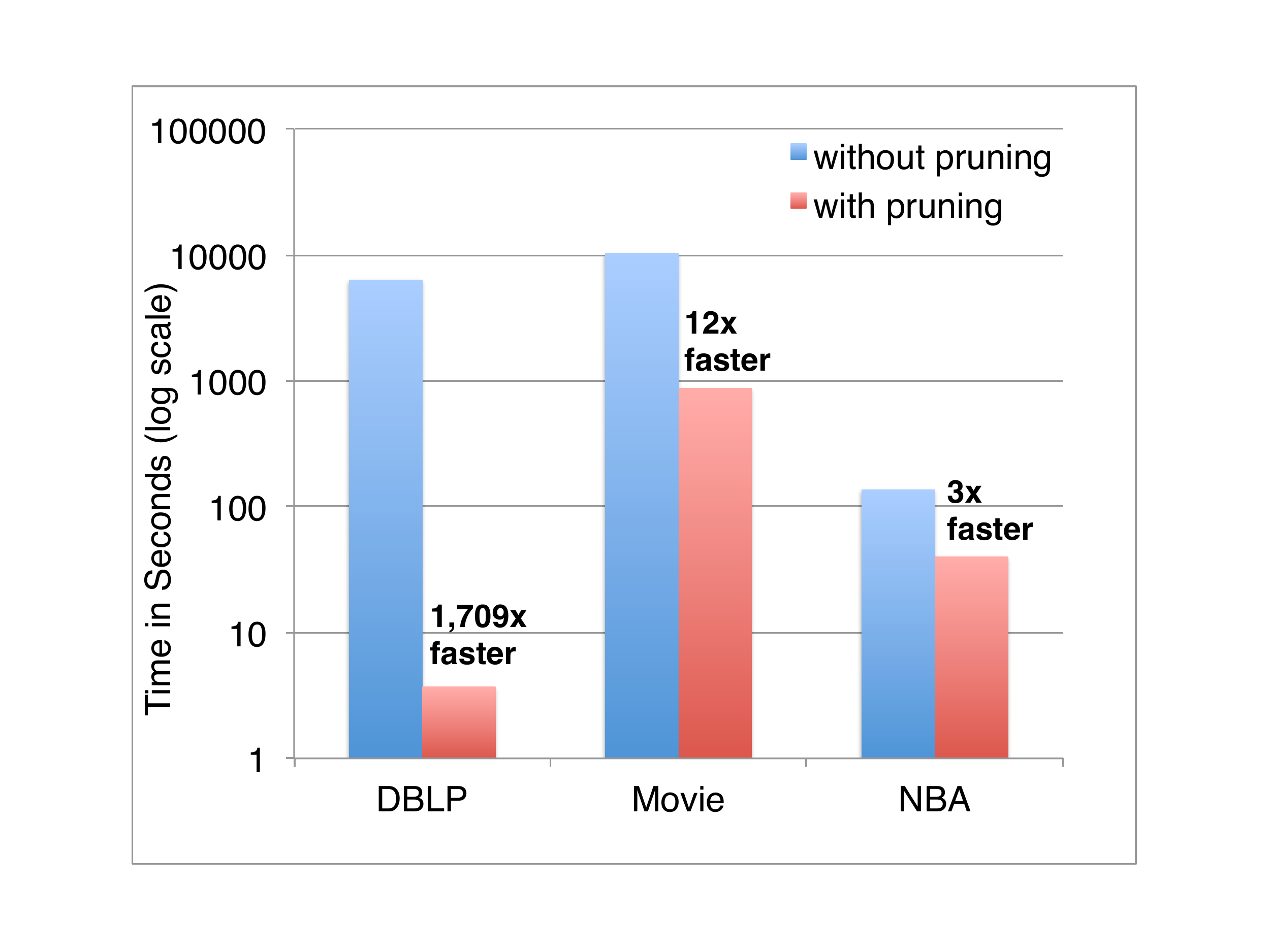}
\vspace{-10pt}
\caption{Time Comparisons before and after pruning on three datasets. Notice time is in log-scale.}
\vspace{-10pt}
\label{fig:time_pruning}
\end{minipage}
\end{figure*}

\subsection{Efficiency Results}\label{sec:exp_efficiency}


{\em A. The speed-up by pruning}. To demonstrate the benefit of our pruning strategy, we run \teamrepalgbasic\ with and without pruning on the three datasets and compare their running time. For {\em DBLP}, we choose the authors of paper~\cite{DBLP:conf/www/LiTZLLH07} (6 authors); for {\em Movie}, we select the film crew of \emph{Titanic} (1997) (22 actors/actresses); for {\em NBA}, we pick the players on Los Angeles Lakers in year 1996 (17 players). The result is presented in Fig.~\ref{fig:time_pruning}. As we can see, the pruning step itself brings significant savings in terms of running time, especially for larger graphs (e.g., {\em DBLP} and {\em Movie}). Notice that according to Lemma~\ref{lm:pruning}, we do not sacrifice any recommendation accuracy by pruning.


{\em B. Further speedup}. Next, we vary the team sizes and compare the running time of \teamrepalgbasic\ with \teamrepalgexact (exact methods); and Ark-L~\cite{kang:fastwalk} with \teamrepalgapp\ (approximate methods). For \teamrepalgbasic\ and Ark-L, we apply the same pruning step as their pre-processing step. The results are presented in Fig.~\ref{fig:time_exactalgs} and Fig.~\ref{fig:time_approxalgs}, respectively. We can see that the proposed \teamrepalgexact\ and \teamrepalgapp\ are much faster than their alternative choices, especially when team size is large. Notice that Ark-L is the best known methods for approximating random walk based graph kernel.

\begin{figure*}[t]
\centering
\begin{minipage}[b]{0.48\textwidth}
\includegraphics[width=\textwidth, height=48mm]{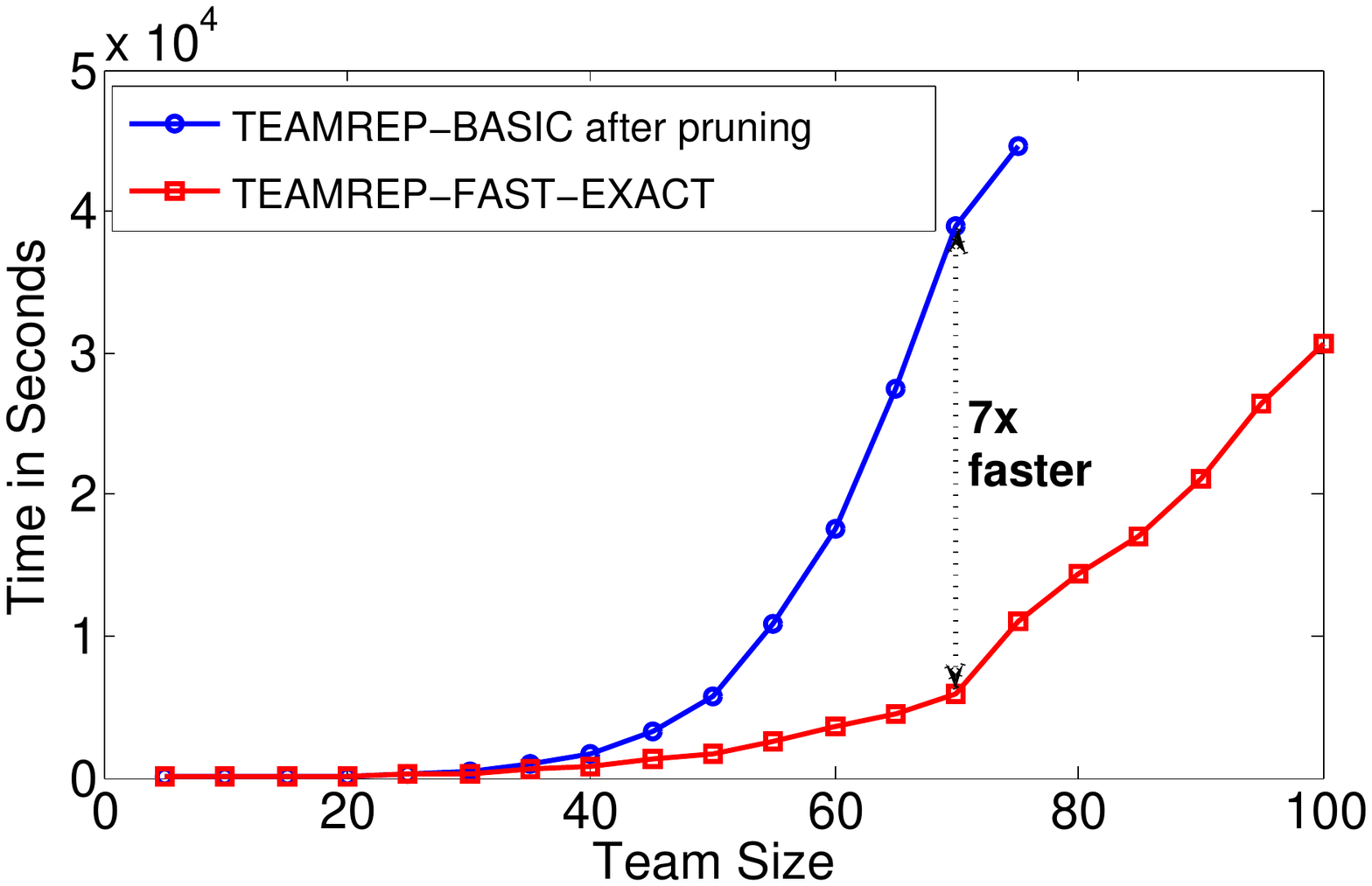}
\vspace{-10pt}
\caption{Time Comparison between \teamrepalgbasic ~ and \teamrepalgexact. \teamrepalgexact ~is on average 3$\times$ faster. \teamrepalgbasic\ takes more than 10 hours when team size = 70. \hh{liangyue: can you move '7x faster' up - between the blue and red curves?}}
\vspace{-15pt}
\label{fig:time_exactalgs}
\end{minipage}~
\begin{minipage}[b]{0.48\textwidth}
\includegraphics[width=\textwidth, height=50mm]{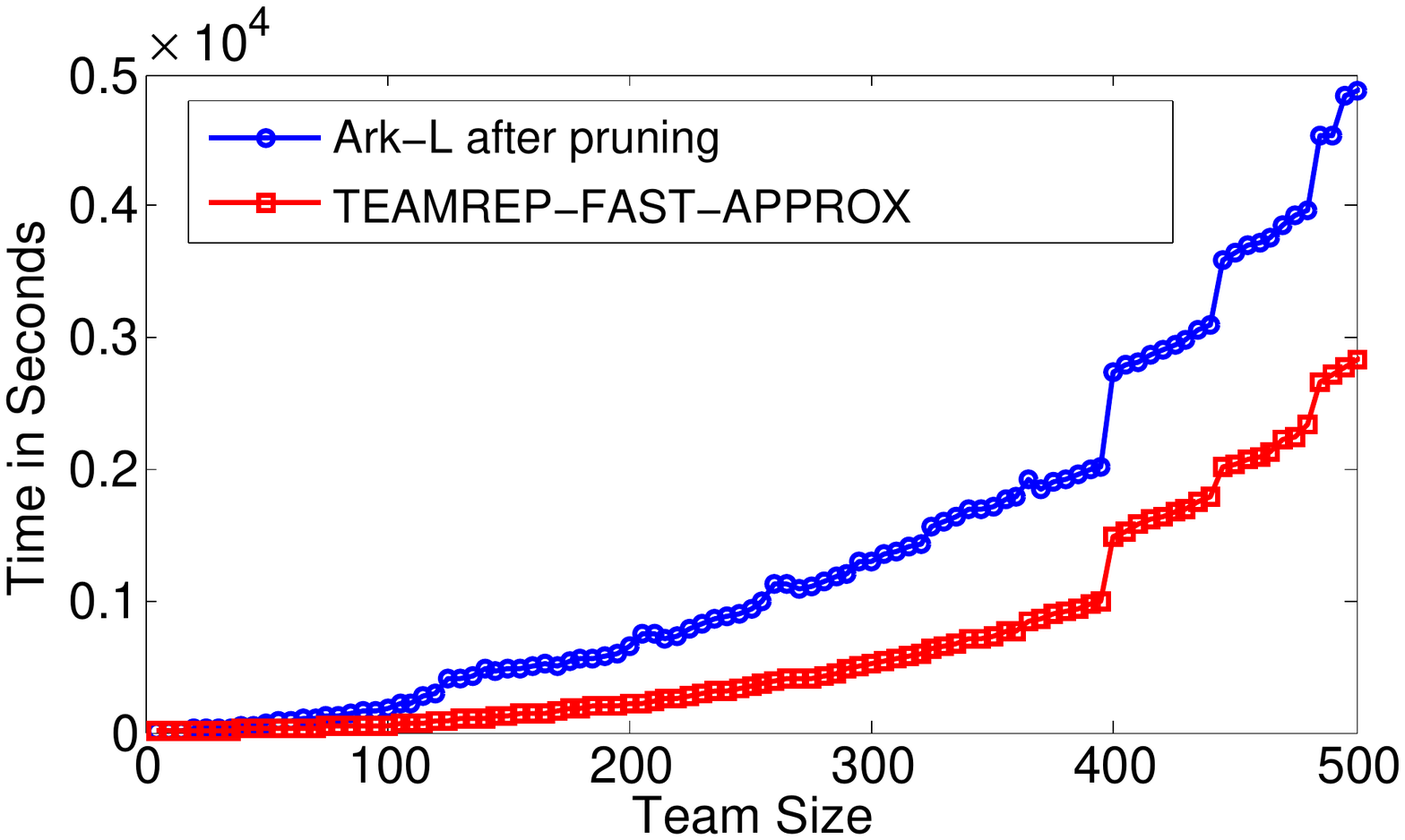}
\vspace{-10pt}
\caption{Time Comparisons between Ark-L[20] and \teamrepalgapp. \teamrepalgapp ~is on average 3$\times$ faster. \hh{for our methods, we do not need to say 'after pruning' in the legend - same for fig 6.}}.
\vspace{-15pt}
\label{fig:time_approxalgs}
\end{minipage}
\end{figure*}

{\em C. Scalability}. \hh{let us call them edges as opposed to links, to be consistent. also change that in the x-axis in fig8-9} To test the scalability of our \teamrepalgexact ~and \teamrepalgapp ~algorithms, we sample a certain percentage of edges from the entire {\em DBLP} network and run the two proposed algorithms on teams with different sizes. The results are presented in Fig.~\ref{fig:scale_fastaexact} and Fig.~\ref{fig:scale_fastapprox}, respectively. As we can seen, both algorithms enjoy a {\em sub-linear} scalability w.r.t. the total number of edges of the input graph ($m$).

\begin{figure*}[t]
\centering
\begin{minipage}[b]{0.49\textwidth}
\includegraphics[width=\textwidth, height=50mm]{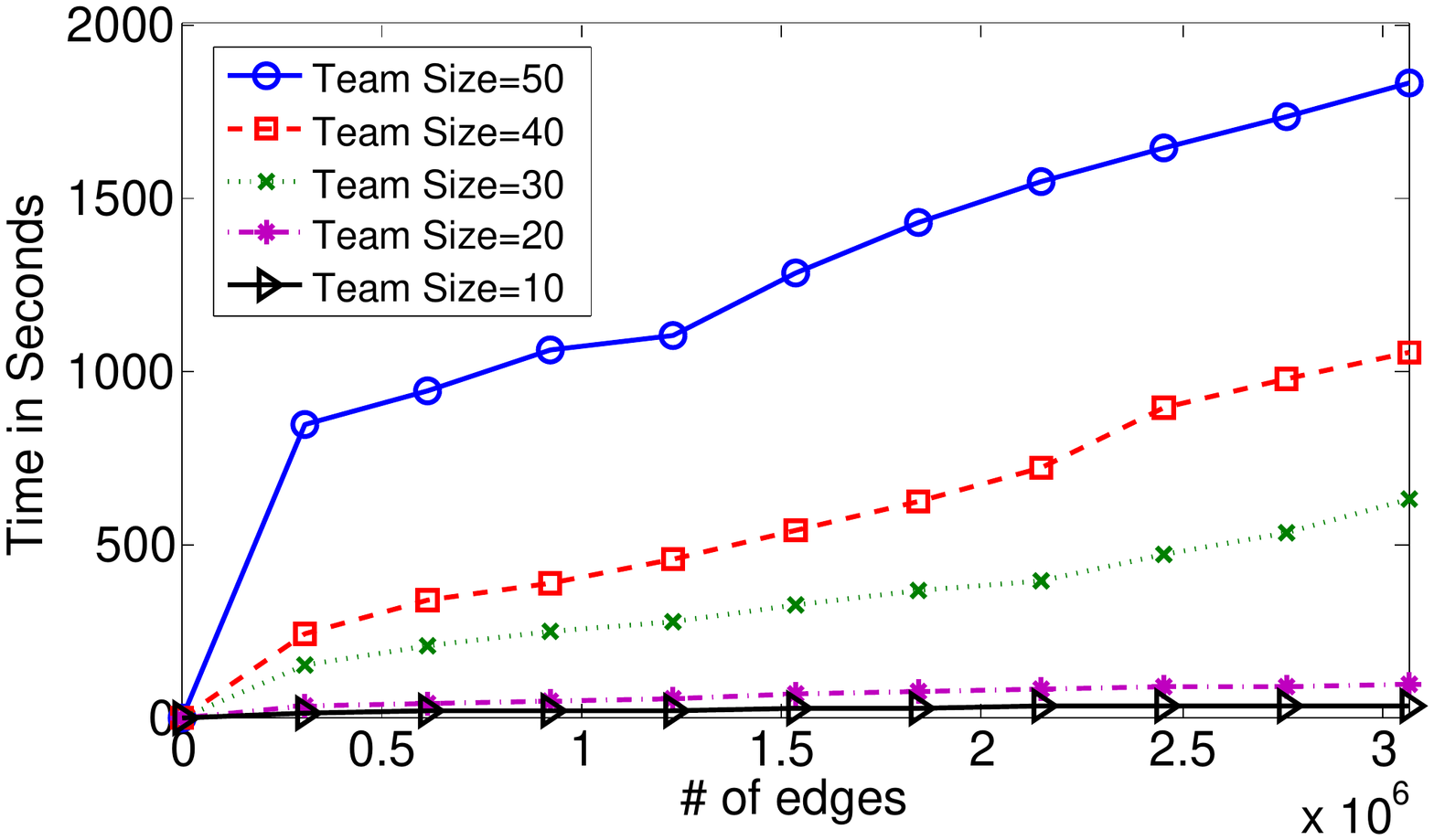}
\vspace{-10pt}
\caption{Running time of \teamrepalgexact\ vs. graph size.  \teamrepalgexact\ scales sub-linearly w.r.t. the number of edges of the input graph. }
\vspace{-15pt}
\label{fig:scale_fastaexact}
\end{minipage}~
\begin{minipage}[b]{0.49\textwidth}
\includegraphics[width=\textwidth, height=50mm]{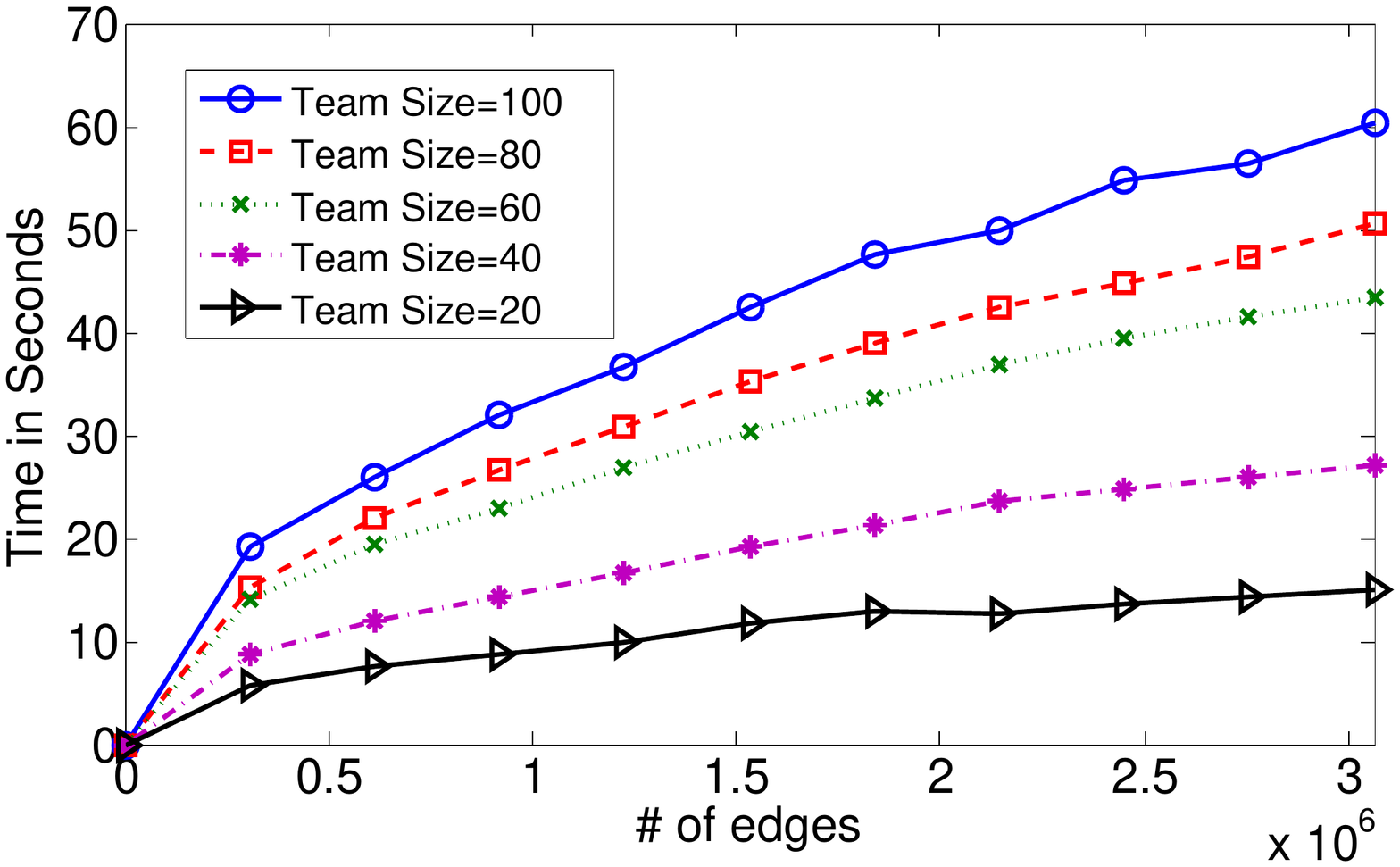}
\vspace{-10pt}
\caption{Running time vs. graph size. \teamrepalgapp\ scales sub-linearly w.r.t. the number of edges of the input graph.}
\vspace{-15pt}
\label{fig:scale_fastapprox}
\end{minipage}
\end{figure*}

\vspace{-10pt}
\section{Related Work}\label{sec:rel}
In this section, we review the related work in terms of (a) team formation, (b) recommendation and expert finding, and (c) graph kernel.

\textbf{Team Formation.} Team formation studies the problem of assembling a team of people to work on a project. To ensure success, the selected team members should possess the desired skills and have strong team cohesion, which is first studied in ~\cite{DBLP:conf/kdd/LappasLT09}. As follow-up work, Anagnostopoulos et al~\cite{DBLP:conf/www/AnagnostopoulosBCGL12} studies forming teams to accommodate a sequence of tasks arriving in an online fashion and Rangapuram et al~\cite{DBLP:conf/www/RangapuramBH13} allows incorporating many realistic requirements into team formation based on a generalization of the densest subgraph problem. With the presence of the underlying social network, the set cover problem is complicated by the goal of lowering the communication cost at the same time. Bogdanov et al~\cite{DBLP:conf/pkdd/BogdanovBBBS13} studies how to extract a diversified group pulled from  strong cliques given a network; this ensures  that the group is both comprehensive and representative of the whole network. Cummings and Kiesler~\cite{DBLP:conf/cscw/CummingsK08} find that prior working experience is the best predictor of collaborative tie strength. To provide insights into designs of online communities and organizations, the systematic differences in appropriating social softwares among different online enterprise communities is analyzed in \cite{DBLP:conf/chi/MullerEMPRG12}.  The patterns of informal networks and communication in distributed global software teams using social network analysis is also investigated in~\cite{DBLP:conf/cascon/ChangE07}. Specific communication structures are proven critical to new product development delivery performance and quality~\cite{DBLP:conf/chi/CataldoE12}. For assessing the skills of players and teams in online multi-player games and team-based sports, ``team chemistry" is also accounted for in  \cite{DBLP:conf/asunam/DeLongTS13,DBLP:conf/pakdd/DeLongS12}.

\textbf{Recommendation and Expert Finding.} Recommendation and expert finding is a very active research topic in data mining and information retrieval, either to recommend products a user is mostly interested in or to identify the most knowledgeable people in a field. Our work is related to this in the sense that we aim to recommend top candidates who are most suitable for the vacancy. A popular method in recommendation (collaborative filtering) is latent factor model ~\cite{DBLP:journals/computer/KorenBV09, DBLP:journals/pieee/DrorKK12,DBLP:conf/kdd/WangB11}. The basic idea is to apply matrix factorization to user-item rating data to identify the latent factors. The factorization technique can be naturally extended by adding biases, temporal dynamics and varying confidence levels. In question-answering sites, \emph{e.g.}, Quora and Stack Overflow, an important task is to route a newly posted question to the `right' user with appropriate expertise and several methods based on link analysis have been proposed~\cite{DBLP:conf/www/ZhangAA07,DBLP:conf/kdd/BouguessaDW08,DBLP:conf/cikm/ZhouLLZ12}. In academia, identifying experts in a research field is of great value, \emph{e.g.}, assigning papers to the right reviewers in a peer-review process~\cite{DBLP:conf/kdd/MimnoM07,DBLP:conf/cikm/KarimzadehganZ09}, which can be done by either building the co-author network~\cite{DBLP:conf/www/LiTZLLH07} or using language model and topic-based model~\cite{DBLP:conf/icdm/DengKL08,DBLP:conf/cikm/HashemiNB13}. For enterprises, finding the desired specialist can greatly reduce costs and facilitate the ongoing projects. Many methods have been proposed to expert search through an organization's document repository~\cite{DBLP:conf/sigir/BalogAR06,DBLP:conf/wsdm/WuST13}.

\textbf{Graph Kernel.} Graph kernel measures the similarity between two graphs. Typical applications include automated reasoning~\cite{DBLP:conf/sdm/TsivtsivadzeUGH11}, bioinformatics/chemoinformatics~\cite{DBLP:conf/nips/FeragenKPBB13,DBLP:journals/jmlr/ShervashidzeSLMB11}. Generally speaking, graph kernels can be categorized into three classes: kernels based on walks~\cite{DBLP:conf/colt/GartnerFW03,DBLP:conf/nips/VishwanathanBS06, vishwanathan:graphkernels,GartnerLF04,BorgwardtK05}, kernels based on limited-sized subgraphs~\cite{HorvathGW04,ShervashidzeVPMB09,KondorSB09} and kernels based on subtree patterns~\cite{MaheUAPV05,ShervashidzeB09,HidoK09}. Graph kernels based on random walk is one of the most successful choices~\cite{BorgwardtKV07}. The idea is to perform simultaneous walks on the two graphs and count the number of matching walks. One challenge of random walk based graph kernel lies in computation. The straight-forward method needs $O(n^6)$ in time. For unlabelled graphs\hh{liangyue: double check if this is true}, the time complexity can be reduced to $O(n^3)$ by reducing to the problem of solving a linear system~\cite{DBLP:conf/nips/VishwanathanBS06, vishwanathan:graphkernels}. With low rank approximation, the computation can be further accelerated with high approximation accuracy~\cite{kang:fastwalk}.


\vspace{-10pt}

\section{Conclusion}
In this paper, we study the problem of \teamrep\ to recommend replacement when a critical team member becomes unavailable. To our best knowledge, we are the first to study this problem. The basic idea of our method is to adopt graph kernel to encode both {\em skill matching} and {\em structural matching}. To address the computational challenges, we propose a suite of fast  and scalable algorithms. Extensive experiments on real world datasets validate the effectiveness and efficiency of our algorithms. To be specific, (a) by bringing skill matching and structural matching together, our method is significantly better than the alternative choices in terms of both average precision (24\% better) and recall (27\% better);\hh{liangyue: fill in} and (b) our fast algorithms are orders of magnitude faster while enjoying a {\em sub-linear} scalability.

In the future, we would like to expand team replacement to team enhancement and team composition. For instance, given the structure of a high-grossing movie (e.g., {\em Saving Private Ryan}) of a particular genre, we want to develop effective algorithm to suggest a team of actors.

\hide{The main contributions of this paper are as follows:
\begin{itemize}
\item[1.] {\bf Problem Definition:} We introduce and formulate \teamrep.
\item[2.] {\bf Algorithm and Analysis:} We design exact and approximate algorithms and analyze their correctness and complexity.
\item[3.] {\bf Experimental Evaluations:} We perform extensive experiments on real world datasets to validate our algorithms' effectiveness and efficiency.
\end{itemize}

We only consider {\em single person replacement} in this paper, as a future direction, we will address the more challenging {\em multiple people replacement} problem. Note that the team size remains the same in the above problems. As another direction, we can study {\em team shrinkage} and {\em team expansion}.
}

\vspace{-18pt}

%
\small
\bibliographystyle{plainnat}
\bibliography{sigproc} 
 
\end{document}